\newtheorem{thm}{Theorem}
\newtheorem{lem}{Lemma}
\newtheorem{exam}{Example}
\newcommand{\supp}[1]{{\text{supp}({#1})}}
\begin{document}

\title{Decision Diagrams for Symbolic Verification of Quantum Circuits}

% \author[1]{Xin~Hong}
% \author[4]{Wei-Jia Huang}
% \author[5]{Wei-Chen Chien}
% \author[1]{Yuan Feng}
% \author[4]{Min-Hsiu Hsieh}
% \author[1]{Sanjiang Li}
% \author[5]{Chia-Shun Yeh}
% \author[1,2,3*]{Mingsheng~Ying\thanks{*Corresponding authors:\{mingsheng.ying, yuan.feng, sanjiang.li\}@uts.edu.au.}}
% \affil[1]{Centre for Quantum Software and Information, University of Technology Sydney, Australia}
% \affil[2]{State Key Laboratory of Computer Science, Institute of Software, Chinese Academy of Sciences, China}
% \affil[3]{Department of Computer Science and Technology, Tsinghua University, China}
% \affil[4]{Hon Hai Quantum Computing Research Center, Taipei, Taiwan}
% \affil[5]{MediaTek, Inc., Hsinchu, Taiwan}

% \renewcommand*{\Affilfont}{\small\it} 
% \renewcommand\Authands{ and }
% \setlength{\affilsep}{0.5em}

\author{
\IEEEauthorblockN{1\textsuperscript{st} Xin~Hong}
\IEEEauthorblockA{\textit{Centre for Quantum Software and Information} \\
\textit{University of Technology Sydney}\\
Sydney, Australia \\
xin.hong@student.uts.edu.au}
\and
\IEEEauthorblockN{2\textsuperscript{nd} Wei-Jia Huang}
\IEEEauthorblockA{\textit{Hon Hai Quantum Computing Research Center} \\
Taipei, Taiwan \\
wei-jia.huang@foxconn.com}
\and
\IEEEauthorblockN{3\textsuperscript{rd} Wei-Chen Chien}
\IEEEauthorblockA{\textit{MediaTek, Inc.} \\
Hsinchu, Taiwan \\
owen.chien@mediatek.com}
\and
\IEEEauthorblockN{4\textsuperscript{th} Yuan Feng}
\IEEEauthorblockA{\textit{Centre for Quantum Software and Information} \\
\textit{University of Technology Sydney}\\
Sydney, Australia \\
yuan.feng@uts.edu.au}
\and
\IEEEauthorblockN{5\textsuperscript{th} Min-Hsiu Hsieh}
\IEEEauthorblockA{\textit{Hon Hai Quantum Computing Research Center} \\
Taipei, Taiwan \\
minhsiuh@gmail.com}
\and
\IEEEauthorblockN{6\textsuperscript{th} Sanjiang Li}
\IEEEauthorblockA{\textit{Centre for Quantum Software and Information} \\
\textit{University of Technology Sydney}\\
Sydney, Australia \\
sanjiang.li@uts.edu.au}
\and
\IEEEauthorblockN{7\textsuperscript{th} Chia-Shun Yeh}
\IEEEauthorblockA{\textit{MediaTek, Inc.} \\
Hsinchu, Taiwan \\
jason.yeh@mediatek.com}
\and
\IEEEauthorblockN{8\textsuperscript{th} Mingsheng~Ying}
\IEEEauthorblockA{\textit{Institute of Software} \\
\textit{Chinese Academy of Sciences}\\
Beijing, China \\
yingms@ios.ac.cn}

}

\maketitle

% \red{Compare: Decision Diagrams for Symbolic Verification of Quantum Circuits}

\begin{abstract}

With the rapid development of quantum computing, automatic verification of quantum circuits becomes more and more important. While several decision diagrams (DDs)  have been introduced in quantum circuit simulation and verification, none of them supports symbolic computation. Algorithmic manipulations of symbolic objects, however, have been identified as crucial, if not indispensable, for several verification tasks. This paper proposes the first decision-diagram approach for operating symbolic objects and verifying quantum circuits with symbolic terms.
As a notable example, our symbolic tensor decision diagrams (symbolic TDD) could verify the functionality of the 160-qubit quantum Fourier transform circuit within three minutes. Moreover, as demonstrated on Bernstein-Vazirani algorithm, Grover's algorithm, and the bit-flip error correction code, the symbolic TDD enables efficient verification of quantum circuits with user-supplied oracles and/or classical controls.

\end{abstract}

\begin{IEEEkeywords}

Decision Diagram, Symbolic Verification, Quantum Circuits

\end{IEEEkeywords}

\section{Introduction}\label{sec:intro}

% \blue{Ying: The \underline{\textbf{central idea}} of symbolic approach is to leverage the power of (symbolic) logic. 

% The current version of the paper is only simple extension of TDDs to TDDs with symbolic weights. 

% I strongly suggest to add a paragraph to discuss how logical laws can be used in this approach; e.g. x+x'= 1, x.x'=0, ...

% Can we use some (mathmatical or software) tools for  optimising boolean expressions (and thus complex-valued boolean expressins)? Combine them with our implementation?  

% Also, you should seriously consider this problem in case studies. 

% Otherwise, I'm sure the reader will ask how  logical laws can be used.}

% \magenta{The laws such as x+x'= 1, x.x'=0 have been used in the reduction and operations of Boolean functions using BDDs. But the characteristics of Boolean function gives us more opportunities for reducing the symTDDs, for example, the reduction rule shown in Fig. \ref{fig:fur_red}. However, we currently failed to construct a complete set of normalisation and reduction rule based only on these laws.}

% \blue{Ying: I meant systematic use of logical laws, not only these two simple laws.}

In the past several years, quantum hardware has experienced rapid development. In particular, IBM has recently announced their 127-qubit quantum processor ``Eagle'', 433-qubit ``Osprey'' \cite{chow2021ibm} as well as the planned 1121-qubit ``Condor'' in their aggressive roadmap \cite{gambetta2020}. As the scale of quantum devices becomes larger and larger,  the automatic verification of quantum circuits becomes more and more important. 

Decision diagram-based methods are perhaps the most successful verification methods for quantum circuits. Successful examples include, among others, the quantum information decision diagram (QuIDD) \cite{viamontes2003improving}, the quantum multi-value decision diagram (QMDD) \cite{niemann2015qmdds}, and tensor decision diagram (TDD) \cite{hong2020tensor}. A recent work \cite{tsai2021bit} even used binary decision diagrams (BDD) in the verification of quantum circuits. All these decision diagrams can be used to represent quantum states, simulate quantum circuits, and represent the functionality of quantum circuits. However, many quantum circuits still have no compact decision diagram representations.

Take the well-known quantum Fourier transform (QFT) as an example \cite{coppersmith2002approximate,nielsen2002quantum}. QFT can be classically simulated in polynomial time \cite{Aharonov_2008}  and, for any given input state in the computational basis, the size of its decision diagram representation is linear in the number of qubits. However, it requires exponentially more memory resources to fulfil the complete functionality. Indeed, for an $n$-qubit QFT, its QuIDD, TDD, or QMDD representation all have $O(2^n)$ nodes \cite{hong2020tensor,zulehner2018advanced}. This is because they have very different output amplitudes on $2^n$ different computational basis states. For example, if $n=27$, it needs 8 GB of memory to store such a decision diagram.
%$2^{25+1}*2*128/1024/1024/1024/8=$

% \delete{So, is it still possible to represent quantum circuits like QFT compactly with a decision diagram? The answer is affirmative, as} \red{(SL: SymTDD does not provide a representation of the functionality of QFT.)} 
It is worth noting that the decision diagram representations of different input basis states have similar patterns. Instead of simulating on all $2^n$ computational basis states, we propose to interpret an arbitrary input basis state on each qubit as a symbol and simulate the circuit on this symbolised basis state. Fortunately, for QFT, the simulation on this symbolised basis state consumes only polynomial resources. This implies that we may verify the functionality of a quantum circuit with a single (symbolised) simulation. The output state for such a symbolised input state can be elegantly represented by a single TDD-like diagram whose edge weights are symbolic objects instead of complex values. 

%\delete{For the QFT with 160 qubits, such a symbolic decision diagram can be easily constructed within three minutes on a laptop computer. In general, this symTDD provides a novel way for verifying quantum circuits.} %\wcc{Adopting the representation could pave the way for verifying quantum circuits.}

%\delete{On the other hand, many quantum circuits naturally include some symbolic part.} \red{Perhaps more importantly, symbolic objects appear naturally as classical control signals.} \delete{For example, the implementation of some} \red{Many quantum} algorithms, such as Grover's algorithm and Bernstein-Vazirani algorithm, \red{run} according to the classical information they are coping with. \delete{Or in other words, these algorithms are constructed and executed depending on some classical control signals.} \delete{At present, fewer \red{few} tools can be used to analyse such circuits on all possible classical control signals efficiently. However, it} is \delete{sometimes} \red{often} necessary if we want to \delete{check} \red{verify} the correctness of these algorithms. One possible way to do it is also to represent the classical control signal as a symbolic object and then check it on the symbolic level.

Perhaps more importantly, symbolic objects appear naturally as classical control signals in many quantum algorithms, such as Grover's algorithm \cite{grover1996fast} and the Bernstein-Vazirani algorithm (BV) \cite{bernstein1997quantum}. To completely verify the correctness of these algorithms, all possible classical control signals need to be considered. One way of achieving this is to regard classical control signals as symbolic objects and then check the circuit on the symbolic level.

This work formally presents the symbolic tensor decision diagrams (or simply symTDDs) for symbolically executing and representing a quantum circuit, establishes the canonicity, and then demonstrates the efficiency of symTDDs in the simulation and verification of quantum circuits such as QFT, BV, Grover's algorithm, and the bit-flip error correction code circuit. In this paper, to fully utilise the efficiency of TDD, we regard symbolic objects as tensors and represent them as TDDs.

Extending TDDs with symbolic (tensor) weights makes it possible to leverage the power of symbolic logic in a systematic way. On the one hand, Boolean algebra laws such as $x\cdot x=x$, $x\cdot x'=0$, and $x+x'=1$ are used in the normalisation and reduction of symTDDs so that canonicity is guaranteed. On the other hand, the generated symTDDs can be used in extracting distributions of the output states by symbolic computations with these symbolic weights. It is also possible to employ software tools such as SMT solvers like Z3 in the generation of symTDD to check, for example, if some given output state or subspace is reachable.

%searching, for instance, the best input state which maximises the output probability.

\noindent \textit{Related works}. Symbolic verification of quantum circuits has also been explored in \cite{ying2020symbolic}, where a symbolic approach for representing quantum circuits with more general matrix-valued Boolean expressions was proposed. Their approach provides a way to verify quantum circuits with existing techniques and tools developed for the verification of classical logic circuits. In contrast, our work is based on TDDs, which combine the merits of both tensor networks and decision diagrams for representing quantum circuits. In \cite{amy2018towards}, 
the author also discussed the verification of quantum circuits in a symbolic way. Potentially, it could efficiently represent a large class of quantum circuits, while sacrificing uniqueness  (i.e., canonicity). Note that the canonicity of symTDD is guaranteed (cf. Theorem~\ref{thm:canonicity}).

Classical simulation of QFT circuit as an important task has been considered in works such as \cite{Aharonov_2008, Yoran_2007, Browne2007}. These works simulate either on a subset of all possible input states \cite{Browne2007} or in an approximate way \cite{Aharonov_2008, Yoran_2007}. Our scheme, in contrast, can reveal information of the circuit for all computational basis states.

Other verification methods such as equivalence checking and model checking have been considered in \cite{burgholzer2021qcec, burgholzer2020improved, Peham_2022, Ying21mcqsbook}, etc. However, none of them considers executing a quantum circuit with a symbolic input state and verifying a quantum circuit using a decision diagram with symbolic weights.

% Since quantum Fourier transform serves as an essential component for many quantum algorithms, classical simulation of quantum Fourier transform circuit as an important task has been considered in works such as \cite{Aharonov_2008, Yoran_2007, Browne2007}. 

% is more likely an implementation of \cite{amy2018towards} based on decision diagrams, which
% The symTDD provides a compact and canonical way to symbolically execute a quantum circuit and represent quantum circuits with classical control.
% The implementation of symTDD is available online.

% The structure of this paper is as follows. In section~\ref{sec:background}, we give the most basic concepts of quantum circuits and tensor decision diagrams. In section~\ref{sec:STDD}, we discuss complex-valued Boolean functions and present the symTDD. Explicit examples are provided in section~\ref{sec:case} to demonstrate the advantage of symTDD and show how it can be applied in the verification of quantum circuits. Finally, in section~\ref{sec:conclusion}, we conclude our work with a brief outlook on future work.

\section{Background}\label{sec:background}

\noindent \textit{Quantum Computing.}
The most fundamental concept in quantum computing is qubit, the counterpart of the classical bit. While a classical bit can be either 0 or 1, a qubit can be in a superposition of two basis states. Using the Dirac notation, a qubit state can be represented as $\ket{\psi}=a\ket{0}+b\ket{1}$, where $a,b$ are complex numbers with $|a|^2+|b|^2=1$. It is often represented by the vector $\ket{\psi}=[a,b]^T$. In general, an $n$-qubit state can be represented by a $2^n$-dim vector $\ket{\psi}=[\alpha_0,\cdots,\alpha_{2^n-1}]^T$. 

The evolution of quantum states is according to unitary operators, which are also called quantum (logic) gates. The quantum gates used in our paper mainly include the Hadamard gate, the $R_k$ gates, and the Controlled-X gate (aka CNOT or CX gate) defined as follows:
% , see Fig.~\ref{fig:gates} for their matrix representations. 
\begin{equation*}
\resizebox{1\hsize}{!}{%
$
H= \frac{1}{\sqrt{2}}
\big[\begin{smallmatrix}
	1 & 1\\ 
	1 & -1
\end{smallmatrix}
\big]
\quad
R_k=\big[
\begin{smallmatrix} 
	1 & 0\\ 
	0 & e^{2\pi \imath/2^k}
\end{smallmatrix}
\big]
\quad
\mathit{CX}=
\bigg[
\begin{smallmatrix}
	1 & 0 & 0 & 0\\ 
	0 & 1 & 0 & 0\\
	0 & 0 & 0 & 1\\
	0 & 0 & 1 & 0
\end{smallmatrix}
\bigg]
$
}
\end{equation*} 

The reader is referred to, e.g., \cite{nielsen2002quantum} for other standard gates used in this paper. Mathematically, a quantum gate on $n$ qubits can be represented as a $2^n\times 2^n$ unitary matrix.
% \yf{mention that a gate on $n$ qubits has matrix representation with dimension $2^n\times 2^n$}

%For simplicity, we also let the matrix representation of some of the gates used in our paper out of the list as they are standard and can be found in any book on quantum computing.

% \begin{figure}

% % \wcc{Aligned}
% \begin{center}

% \[
% \begin{array}{ r c l }

% \mathit{H\ gate}: &
% \begin{tikzcd}
% &\qw  & \gate{H}  &\qw &\qw
% \end{tikzcd} &
% \frac{1}{\sqrt{2}}
% \begin{bmatrix}
% 	1 & 1\\ 
% 	1 & -1
% \end{bmatrix} \\ [3ex]

% \mathit{X\ gate}: &
% \begin{tikzcd}
% &\qw  & \gate{X}  &\qw &\qw
% \end{tikzcd} &
% \phantom{\frac{1}{\sqrt{2}}}
% \begin{bmatrix} 
% 	0 & 1\\ 
% 	1 & 0
% \end{bmatrix} \\ [3ex]

% \mathit{R_k\ gate}: &
% \begin{tikzcd}
% &\qw  & \gate{R_k}  &\qw &\qw
% \end{tikzcd} &
% \phantom{\frac{1}{\sqrt{2}}}
% \Big[\begin{smallmatrix} 
% 	1 & 0\\ 
% 	0 & e^{2\pi \imath/2^k}
% \end{smallmatrix}
% \Big]\\ [3ex]

% \mathit{CX\ gate}: &
% \begin{tikzcd}
% &\qw  & \ctrl{1}  &\qw &\qw \\
% &\qw  & \gate{X}  &\qw &\qw
% \end{tikzcd} &
% \phantom{\frac{1}{\sqrt{2}}}
% \bigg[
% \begin{smallmatrix}
% 	1 & 0 & 0 & 0\\ 
% 	0 & 1 & 0 & 0\\
% 	0 & 0 & 0 & 1\\
% 	0 & 0 & 1 & 0
% \end{smallmatrix}
% \bigg]

% \end{array}
% \]

% \end{center}
% \caption{The matrix representations of the Hadamard gate, the Pauli X gate, the $R_k$ gates and the Controlled-X gate. \yf{This figure can be deleted: put the matrix representations in the main text}}
% \label{fig:gates}
% \end{figure}

% Quantum logic gates can be classified single-qubit gates and two-qubit gates.
A quantum circuit is a sequence of quantum gates. For example, Fig. \ref{fig:exp-for-quantum-circuit} depicts a quantum circuit of the QFT on three qubits. Given a computational basis input $\ket{s_0}\ket{s_1}\ket{s_2}$, the expected output is $\frac{1}{2\sqrt{2}}(\ket{0}+e^{2\pi \imath 0.s_0s_1s_2}\ket{1})(\ket{0}+e^{2\pi \imath 0.s_1s_2}\ket{1})(\ket{0}+e^{2\pi \imath 0.s_2}\ket{1})$. Note that we use $\imath$ to represent the imaginary unit. 

% Using quantum circuits can implement many useful quantum algorithms that can achieve quantum speed up. For example, quantum Fourier transform\cite{coppersmith2002approximate}, Grover's algorithm\cite{grover1996fast}, Shor's algorithm\cite{shor1994algorithms} and so on. However, quantum computing is still in the Noisy Intermediate-Scale Quantum (NISQ) \cite{preskill2018quantum} stage. These algorithms are still not able to demonstrate the power. So we still need quantum simulators to simulate these algorithms.

% \begin{figure}
% \scalebox{0.8}{
% \centerline{
% \Qcircuit @C=1em @R=0.9em {
% \lstick{\ket{x_0}} & \gate{H} &\ctrl{1}   &\ctrl{2}   &\qw      &\qw         &\qw      &\qw & \rstick{\ket{0}+e^{2\pi i 0.x_0x_1x_2}\ket{1}}\\
% \lstick{\ket{x_1}} & \qw      &\gate{R_2} &\qw        &\gate{H} &\ctrl{1}    &\qw      &\qw & \rstick{\ket{0}+e^{2\pi i 0.x_1x_2}\ket{1}}\\
% \lstick{\ket{x_2}} & \qw      &\qw        &\gate{R_3} &\qw      &\gate{R_2}  &\gate{H} &\qw & \rstick{\ket{0}+e^{2\pi i 0.x_2}\ket{1}}\\
% }
% }
% }
% \caption{The circuit for 3-qubits quantum Fourier transform.}
% \label{exp-for-quantum-circuit}
% \end{figure}

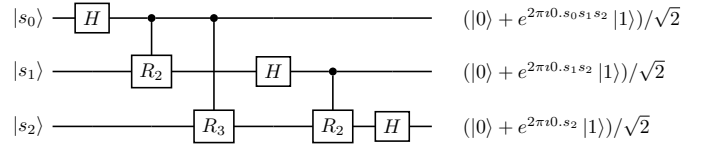
\begin{figure}
\centerline{
\scalebox{0.75}{
\begin{quantikz}[column sep=0.4cm,row sep=0.4cm]
\lstick{$\ket{s_0}$} & \gate{H} &\ctrl{1} &\ctrl{2}   &\qw      &\qw       &\qw      &\qw & \rstick{$(\ket{0}+e^{2\pi \imath 0.s_0s_1s_2}\ket{1})/\sqrt{2}$}\\
\lstick{$\ket{s_1}$} & \qw      &\gate{R_2} &\qw        &\gate{H} &\ctrl{1}  &\qw      &\qw & \rstick{$(\ket{0}+e^{2\pi \imath 0.s_1s_2}\ket{1})/\sqrt{2}$}\\
\lstick{$\ket{s_2}$} & \qw      &\qw      &\gate{R_3}   &\qw      &\gate{R_2}  &\gate{H} &\qw & \rstick{$(\ket{0}+e^{2\pi \imath 0.s_2}\ket{1})/\sqrt{2}$}\\
\end{quantikz}
}
}
\caption{The circuit for 3-qubit quantum Fourier transform.}
\label{fig:exp-for-quantum-circuit}
\end{figure}

% \delete{Decision diagrams have been widely used in the verification of quantum circuits\cite{burgholzer2020improved,yamashita2010fast}. Among them, the most commonly used decision diagrams include the quantum information decision diagram (QuIDD) \cite{viamontes2003improving}, the quantum multi-value decision diagram (QMDD)\cite{niemann2015qmdds} and tensor decision diagram (TDD) \cite{hong2020tensor}. }

%Since there is an intrinsic similarity among the decision diagrams mentioned in section \ref{sec:intro}, we select TDD as our starting point. 
\vspace*{2mm}
\noindent\textit{Tensor Decision Diagram (TDD).}
Tensor network methods play a fundamental role in machine learning and quantum physics. By combining the merits of both tensor networks and decision diagram, TDDs \cite{hong2020tensor} provide a compact and canonical way to represent quantum circuits, which are a special class of tensor networks. TDDs have also been applied in equivalence checking of quantum circuits \cite{hong2021approximate}.

% WCC: use ${}$ to make unbreakable expression
% \yf{what do you mean by `linear' map here?}
A rank $n$ complex-valued tensor is a map ${\phi(q_1\ldots q_n): \{0,1\}^n \to \mathbb{C}}$, where $I=\{q_1,\cdots,q_n\}$ is the set of indices of the tensor and $\phi(a_1, \ldots, a_n)$ its value for the evaluation $q_1=a_1,\cdots,q_n=a_n$. In the following, we also denote it as $\phi (\bold{q})$ for simplicity. The classical Boole-Shannon expansion \cite{Shannon_1938} also extends to tensors:
${\phi = q_k^{\prime} \cdot \phi|_{q_k = 0} + q_k \cdot \phi|_{q_k = 1}}$, where $\phi|_{q_k = a_k}$ is a rank $n-1$ tensor with indices $\{q_1,\cdots,q_{k-1},q_{k+1},\cdots,q_n\}$ and $\phi|_{q_k = a_k}(a_1,\cdots, a_{k-1},a_{k+1},\cdots,a_{n})=\phi(a_1,\cdots,a_n)$, and $q_k^{\prime}$ represents the complement of $q_k$. Obviously, an $n$-qubit quantum state can be seen as a rank $n$ tensor, and an $n$-qubit quantum gate can be seen as a rank $2n$ tensor. When associating to quantum states, $q_k$ normally represents the index of the $k$-th qubit and we use $q_k^i$ and $q_k^o$ to represent the indices of the input and output of a quantum gate corresponding to the $k$-th qubit.

% \yf{why use $b$ here? You are using $a_i$ to denote the value of $q_i$. So it's better to use $a$ instead of $b$ here}
% the restriction of the tensor $\phi$ to the case when $q_i=a_i$, $a_i \in \{0,1\}$ \yf{explain what $q_i'$ is.} \yf{You also should explain what `restriction' of a tensor is. Say,  $\phi|_{q_i = b}$ is a tensor with index set $I-\{q_i\}$ and its value..., etc.}.
%   \yf{ask yourself what `which share a common index set $\bold{r}$' means}

There are mainly two tensor operations: addition and contraction. Addition is defined for two tensors $\phi,\psi$ with the same index set, and $(\phi+\psi)(\bold{a})=\phi(\bold{a})+\psi(\bold{a})$ for any assignments $\bold{a}$ of the indices. Tensor contraction can be defined for two tensors with different index sets. Let $\gamma(\bold{q}, \bold{r})$ and $\xi(\bold{r}, \bold{s})$ be two tensors of which the indices share a common part $\bold{r}$. Their contraction is the sum of product over every assignment $\bold{c}$ of indices in $\bold{r}$. The new tensor $\phi(\bold{q}, \bold{s})$ with assignments $\bold{a}$ and $\bold{b}$ hence becomes
\begin{equation}\label{eq:contdef}
	\phi(\bold{a}, \bold{b})=\sum_{\bold{c}\in \{0,1\}^{\bold{r}}}{\gamma(\bold{a}, \bold{c})\cdot \xi(\bold{c}, \bold{b})}.
\end{equation}

In this paper, we will also use the Hadamard product $\odot$ of two tensors, which is defined for two tensors with the same indices and $(\phi\odot\psi)(\bold{a})=\phi(\bold{a})\cdot \psi(\bold{a})$. All complex-valued tensors with the same index set form a commutative ring with the two operations $+$ and $\odot$.
 
%  \yf{connect the general notion of `tensor' with the specific `state' and `gate' introduced in the last subsection.}

%Then, the tensor decision diagram represent it as a dichotomous way ${\phi = \overline{x_i} \cdot \phi|_{x_i = 0} + x \cdot \phi|_{x_i = 1}}$, where $\phi|_{x_i = b}$ is the restriction of the tensor $\phi$ to the case when $x_i=b$, $b\in \{0,1\}$. \lsj{This sentence does not explain what is a TDD.}

A TDD over an index set $I$ is a directed acyclic graph $\mathcal{T}=(V,E,idx,val,w)$, where the node set $V$ consists of non-terminal nodes in $V_N$ and terminal ones in $V_T$ and each non-terminal node $v$ has two child nodes $low(v)$ and $high(v)$ (also called the 0- and 1-successors of $v$); the associated functions are $idx:V_N\rightarrow I$, $val:V_T\rightarrow\mathbb{C}$, and $w:E\rightarrow\mathbb{C}$ which are respectively assigned node index, terminal node value, and edge weight \cite{hong2020tensor}. For any internal node $v$, we also call the edge from $v$ to $low(v)$ (resp. $high(v))$ the low-edge or 0-edge (resp. the high-edge or 1-edge). 
A general TDD may be transformed into a reduced and normalised one, which is canonical w.r.t. a fixed index order, by a series of normalisation and reduction procedures. Usually, all terminal nodes in a TDD are merged to a unique one labelled with 1. Then the tensor value of an evaluation $q_1=a_1,\cdots,q_n=a_n$  can be obtained by multiplying the weights of the edges along the path assigned by the evaluation.
% \yf{how does this description apply to reduced TDDs? Say, the left-most path of the TDD in Fig. \ref{fig:TDD_exp}?}.

% \yf{specifically mentioning this property gives the impression that reduced and normalised TDD means that there is a unique terminal node associated with the value $1$} 

\begin{figure}
    \centering
    \includegraphics[width=0.08\textwidth]{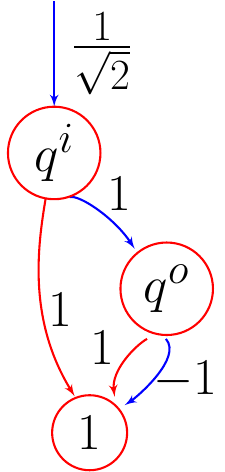}
    \caption{The canonical TDD of the $H$ gate w.r.t. the index order $q^i\prec q^o$, where $q^i$ and $q^o$ are two internal nodes and we have a unique terminal node.}
    \label{fig:TDD_exp}
\end{figure}

% the \yf{not `the', but `a'. Or `the canonical'}
\begin{exam}
Fig. \ref{fig:TDD_exp} gives the canonical TDD of the $H$ gate w.r.t. the index order $q^i\prec q^o$. Each matrix element of $H$ corresponds to the product of the weights along a maximal path from the root to the terminal. For any internal node $u$, the outgoing red (blue, resp.) edge connects its 0-successor (1-successor, resp.), denoting that the index takes value $0$ ($1$, resp.). The red edge of node $q^i$ implies that all elements of the first column of this matrix are $\frac{1}{\sqrt{2}}\cdot 1=\frac{1}{\sqrt{2}}$. Similarly, the blue edge leading to node $q^o$ implies that the second column of the matrix is $\frac{1}{\sqrt{2}}\cdot 1\cdot [1,-1]^T=[\frac{1}{\sqrt{2}},-\frac{1}{\sqrt{2}}]^T$.
% \yf{here you use dotted vs solid lines to distinguish 0 and 1. In STDD different colors are used. Better to be consistent.}
\end{exam}
%\yf{you may refer to your previous TDD paper or QMDD papers to see how to make the explanation of TDDs more clear.}

% \yf{distinct indices?} from the starting edge (weight $=1/\sqrt{2}$) to the terminal node (value $=1$) \yf{the whole sentence is vague}

\section{Symbolic TDD}\label{sec:STDD}

%In this section, we introduce the symTDD. 
Symbolic TDD (symTDD) is TDD with all weights being symbolic objects instead of complex values. In this paper, these symbolic objects are complex-valued tensors whose indices are symbols from either the input symbolised basis state or classical control signals.

\subsection{Motivating Examples}\label{sec:motivation}

Similar to the 3-qubit QFT case shown in Fig.~\ref{fig:exp-for-quantum-circuit}, the output states (functionality)  of a quantum circuit $C$ on all $2^n$ computational basis states can be (symbolically) computed by executing $C$ on a single symbolised basis state. Furthermore, symTDD can also be used to represent the oracle of many quantum algorithms, including Grover's algorithm and Bernstein-Vazirani algorithm.

% 1. Use symTDD to simulate the execution of a quantum circuit with an unknown computational basis input.

% 2. Use symTDD to describe the (unknown) oracle of Grover algorithm.

Fig.~\ref{fig:exp_stdd} illustrates several symTDDs, which represent (a) the symbolic input basis state $\ket{s}=s^{\prime}\ket{0}+s\ket{1}$, (b) $H\ket{s}$, the resultant state of applying a Hadamard gate on $\ket{s}$, (c)  $R_2\ket{s}$,  the resultant state of applying a rotation gate $R_2$ on $\ket{s}$, and (d) the resultant state of applying a 2-qubit Grover oracle on $\ket{001}$, where $\ket{1}$ is the state of the ancilla qubit. Note that $H\ket{s}=\frac{1}{\sqrt{2}}(\ket{0}+(s^{\prime}-s)\ket{1})$. If we take $s=0$, then Fig.~\ref{fig:exp_stdd}(a,b) exactly give the TDD representations of the states $\ket{0}$ and $\ket{+}=H\ket{0}$. A similar observation applies to the case $s=1$. This implies that the information of the circuit for all input computational basis states is captured by a single simulation on the symbolised basis state. Note that all weights here are tensors: $s$ and $s^{\prime}$ (the complement of $s$) represent the tensors with only one index $s$, and $\phi(s)=s$ and $\xi(s)=1-s$; $s_0^{\prime}s_1^{\prime}$ and $s_0s_1^{\prime}+s_1$ represent the rank 2 tensors with indices $s_0, s_1$, where $\phi(s_0,s_1)=(1-s_0)\cdot(1-s_1)$ and $\xi(s_0,s_1)=s_0\cdot (1-s_1)+s_1$.
% \yf{You should mention that the weights here are all tensors. Better to explain why, say, $s_0s_1' + s_1$ is a tensor.}

% From Fig.~\ref{fig:exp_stdd}(d), we observe that, when the input states of the first two qubits are both $\ket{0}$ (i.e., following the rightmost path from $q_0$ to $q_2$ via $q_1$), the ancilla qubit will be flipped (i.e., $s_0's_1'=1$ and $s_0s'_1+s_1=0$) if and only if $s_0=0$ and $s_1=0$. \yf{what's the point of this paragraph?}

% In addition, Fig. \ref{fig:exp_stdd} (c) and (d) give the symTDD representation of applying a $R_2$ and $R_3$ gate to the input state $\ket{s}$.

\begin{figure}
    \centering
    \subfigure[]{
    \includegraphics[height=0.15\textwidth]{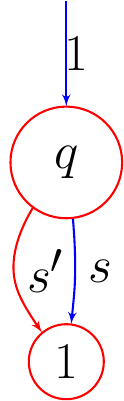}
    }
    \subfigure[]{
    \includegraphics[height=0.15\textwidth]{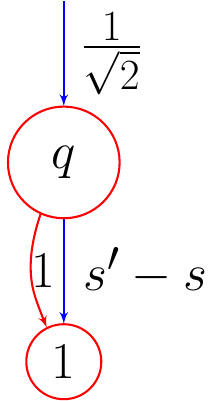}
    }
    \subfigure[]{
    \includegraphics[height=0.15\textwidth]{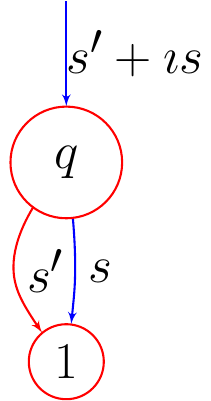}
    }
    % \subfigure[]{
    % \includegraphics[height=0.16\textwidth]{figures/exp3.pdf}
    % }    
    \subfigure[]{
    \includegraphics[height=0.16\textwidth]{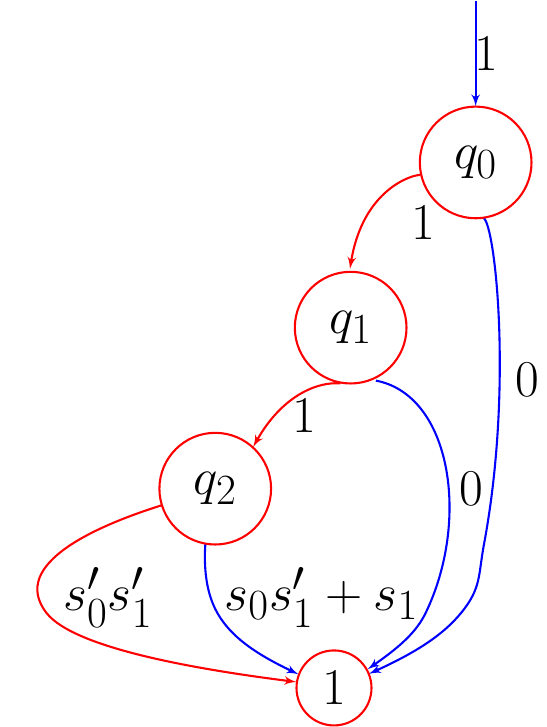}
    }    
    \caption{symTDD representations of (a)  a 1-qubit symbolised computational basis state $\ket{s}$; (b)  $H\ket{s}$; (c) $R_2\ket{s}$; (d) the output state of the 2-qubit Grover oracle applied on $\ket{001}$, where $\ket{1}$ is the state of the ancilla qubit.}
    %  \red{SL: Add examples like $R_k\ket{s_0}$ for $k=2,3$ here.}
    \label{fig:exp_stdd}
\end{figure}

\subsection{Symbolic TDD and Operations}

SymTDD is an extension of TDD \cite{hong2020tensor} in that its weights are, instead of complex values, complex-valued tensors over a set of indices (i.e., Boolean symbols) $S=\{s_0, \cdots, s_{m-1}\}$. Since a TDD represents a tensor, a symTDD represents a tensor-valued tensor $\phi$ (a map from $\{0,1\}^n$ to tensors), or more precisely, a map in $\{0,1\}^n \to \{0,1\}^m \to \mathbb{C}$. For clarity, let $I=\{q_0,\ldots,q_{n-1}\}$ be the set of indices other than those in $S$. For convenience, we often address indices in $I$ and $S$ as, respectively, quantum and classical indices. In addition, we call $\phi$ an $(I,S)$-tensor with rank $n$ or simply an $(m,n)$-tensor. This representation gives us more flexibility for representing and analysing a quantum circuit. For example, let $\bold{s}$ be the indices that appear in both an $n$-qubit input state $\ket{\psi}$ and an $n$-qubit quantum circuit $U$ (note that a tensor with index set $S$ can always be regarded as a tensor with index set $S'$ with $S'\supseteq S$). That is, $\ket{\psi}=[f_0(\bold{s}),\cdots,f_{2^n-1}(\bold{s})]^T$ and $U = [u_{ij}(\bold{s}): i,j=0,\ldots, 2^n-1]$
where both $f_j$ and $u_{ij}$ are tensors in $\{0,1\}^m \to \mathbb{C}$. Note that 
$U\ket{\psi} = [g_0(\bold{s}),\cdots,g_{2^n-1}(\bold{s})]^T$ where for each $i$, 
$$g_i(\bold{s}) = \sum\nolimits_{j=0}^{2^n-1} u_{ij}(\bold{s})\odot f_j(\bold{s}).$$
Then $g_i$ is again a tensor in $\{0,1\}^m \to \mathbb{C}$, and finally, $U\ket{\psi}$ is a symbolic tensor in $\{0,1\}^n \to \{0,1\}^m \to \mathbb{C}$.
%\delete{Since a rank $n$ tensor can be represented by a map in $\{0,1\}^{(n-m)} \to \{0,1\}^m \to \mathbb{C}$ for any $m \in \{0, \cdots n-1\}$ \yf{this is not a complete sentence}. When $m=0$, its symTDD representation is exactly a TDD.}

% Following, we take $S$ to be the collection of all complex-valued tensor over a set of indices $S=\{s_0, \cdots, s_{m-1}\}$. Then, $S$ forms a commutative ring. Let, $f,\ g$ be two elements in $S$, then $f+g$ and $f\cdot g$ are both an element in $S$, where $(f+g)(\bold{a})=f(\bold{a})+g(\bold{a})$ and $(f\cdot g)(\bold{a}) = f(\bold{a})\cdot g(\bold{a})$, for any assignment $a$ of variables in $X$. 

% Using the Boole-Shannon expansion, $f=x_i^{\prime}\cdot f|_{x_i=0}+x_i\cdot f|_{x_i=1}$ and $g=x_i^{\prime}\cdot g|_{x_i=0}+x_i\cdot g|_{x_i=1}$. Then, $f+g=x_i^{\prime}\cdot (f|_{x_i=0}+g|_{x_i=0})+x_i\cdot (f|_{x_i=1}+g|_{x_i=1})$ and $f\cdot g=x_i^{\prime}\cdot (f|_{x_i=0}\cdot g|_{x_i=0})+x_i\cdot (f|_{x_i=1}\cdot g|_{x_i=1})$, which indicates a recursive implementation of both the addition and multiplication of two tensors. 

% \subsection{Operations}
The overall representation and calculation are similar to that in TDD, except that the addition and multiplication of complex numbers used in TDD (Alg.s~1$\sim$3 in \cite{hong2020tensor}) should be replaced by the addition and multiplication of complex-valued tensors. 
%\delete{They have the same complexity, except that every operation should be augmented by the complexity of manipulating two weighted BDDs.}

% The complexities of the addition and contraction of two TDDs with complex value weight are $\mathcal{O}(|\mathcal{F}|\cdot|\mathcal{G}|)$ and $\mathcal{O}(|\mathcal{F}|^2\cdot|\mathcal{G}|^2)$ \cite{hong2020tensor}, respectively, while $|\mathcal{F}|$ and $|\mathcal{G}|$ are the numbers of nodes in the two TDDs. The complexities of the addition and multiplication of two weighted BDDs are both $\mathcal{O}(|\mathcal{B}_1|\cdot|\mathcal{B}_2|)$ \cite{bryant1986graph}, while $|\mathcal{B}_1|$ and $|\mathcal{B}_2|$ are the number of nodes in the two BDDs. Thus, The complexities of the addition and contraction of two symTDDs are $\mathcal{O}(|\mathcal{F}|\cdot|\mathcal{G}| \cdot |\mathcal{B}_1|\cdot|\mathcal{B}_2|)$ and $\mathcal{O}(|\mathcal{F}|^2\cdot|\mathcal{G}|^2 \cdot |\mathcal{B}_1|\cdot|\mathcal{B}_2|)$, while $|\mathcal{F}|$ and $|\mathcal{G}|$ are the numbers of nodes in the two symTDDs, and $|\mathcal{B}_1|$ and $|\mathcal{B}_2|$ are the number of nodes of the biggest BDDs appeared on the weights of the two symTDDs.

\subsection{Normalisation}
Normalisation is required to make symTDD representations canonical. The main idea is to extract a (greatest) common part from the weights on the two outgoing edges of every internal node and multiply it to their incoming edges, such that no more common part can be further extracted from the two remaining weights.

Assume that all tensors are over the same index set $S$. Let $f,g$ be the weights on the outgoing edges of internal node $v$. We define the extracted tensor $h$ and the remaining parts $f^*$ and $g^*$ in a term-by-term manner. For any assignment $\bold{a}$ of $S$, we set $h(\bold{a})$ as $f(\bold{a})$ if $f(\bold{a})\neq 0$, and $g(\bold{a})$ otherwise. Furthermore,
\begin{equation*}\label{eq:loc_norm}
\resizebox{\hsize}{!}{%
$
f^*(\bold{a})=
\begin{cases}
1 & \text{if}\ f(\bold{a})\neq 0\\
0 & \text{otherwise},
\end{cases}
\quad
g^*(\bold{a})=
\begin{cases}
g(\bold{a})/f(\bold{a}) & \text{if}\ f(\bold{a})\neq 0\\
1 & \text{if}\ f(\bold{a})= 0,\ g(\bold{a})\neq 0\\
0   & \text{if}\ f(\bold{a})=g(\bold{a})= 0
\end{cases}
$
}
\end{equation*} 
% \yf{So this `local normalisation' step will take exponential time? Given that you use DD to represent $f$ and $g$, it's better to construct local normalisation directly based on DDs, not the vector representations. Consequently, the complexity will depend on the sizes of the DDs, not necessarily being exponential.}
We call this the local normalisation procedure and denote the result as $loc\_norm(f,g)=(h,f^*,g^*)$. For example, let $f=[2\imath,0,1+\imath,0]$ and $g=[0,1,1-\imath,0]$ be two tensors over $S$ in vector representations. They will be normalised to $(h,f^*,g^*)$ with $h=[2\imath,1,1+\imath,0]$, $f^*=[1,0,1,0]$, and $g^*=[0,1,\frac{1-\imath}{1+\imath},0]$. 

%This local normalisation procedure can also be conducted directly using TDD. Actually, the whole procedure is similar to TDD addition and multiplication as they all execute in a term-by-term way and are implemented recursively.

Let $\supp{f}=\{\bold{a} \in \{0,1\}^n : f(\bold{a})\neq 0\}$. For a local normalisation $loc\_norm(f,g)=(h,f^*,g^*)$, we have  $\supp{h}=\supp{f} \cup \supp{g}$, $\supp{f^*}=\supp{f}$, and $\supp{g^*}=\supp{g}$. If $\supp{f}=\{0,1\}^n$, then $h=f$, $f^{*}=1$ and $g^*=\big[\frac{g_0}{f_0},\cdots,\frac{g_{2^n-1}}{f_{2^n-1}}\big]$. We also have the following useful lemma.
\begin{lem}\label{lemm:loc}
If $loc\_norm(f,g)=(h,f^*,g^*)$ and $\supp{h} \subseteq supp(h^*)$, then $loc\_norm(h^*\odot f, h^* \odot g)=(h^*\odot h,f^*,g^*)$; if $loc\_norm(h\odot f,h\odot g)=(h,f,g)$ and $\supp{h} = supp(h^*)$, then $loc\_norm(h^*\odot f, h^* \odot g)=(h^*,f,g)$.
\end{lem}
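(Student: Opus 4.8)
\emph{Proof idea.} Since $loc\_norm$, the Hadamard product $\odot$, and $\supp{\cdot}$ are all defined term by term, it suffices to verify each claim at an arbitrary fixed assignment $\bold{a}\in\{0,1\}^n$. The plan is to isolate a single \emph{scaling fact} and then read off both halves of the lemma from it: if $loc\_norm(f,g)=(h,f^*,g^*)$ and $t$ is any tensor with $\supp{t}\supseteq\supp{h}=\supp{f}\cup\supp{g}$, then $loc\_norm(t\odot f,\,t\odot g)=(t\odot h,\,f^*,\,g^*)$.

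To prove the scaling fact, fix $\bold{a}$ and split into three cases according to the zero pattern of $f(\bold{a}),g(\bold{a})$. If $f(\bold{a})\neq 0$, then $\bold{a}\in\supp{f}\subseteq\supp{t}$, so $t(\bold{a})\neq 0$ and hence $t(\bold{a})f(\bold{a})\neq 0$; the new extracted entry is $t(\bold{a})f(\bold{a})=t(\bold{a})h(\bold{a})$ (as $h(\bold{a})=f(\bold{a})$ here), the new $f^*$-entry is $1$, and in the $g^*$-formula the common factor $t(\bold{a})$ cancels to leave $g(\bold{a})/f(\bold{a})$ --- matching $(t\odot h,f^*,g^*)$ at $\bold{a}$. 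If $f(\bold{a})=0\neq g(\bold{a})$, then symmetrically $t(\bold{a})\neq 0$ and $t(\bold{a})g(\bold{a})\neq 0$, and the new triple at $\bold{a}$ is $(t(\bold{a})g(\bold{a}),0,1)=(t(\bold{a})h(\bold{a}),f^*(\bold{a}),g^*(\bold{a}))$. If $f(\bold{a})=g(\bold{a})=0$, then $h(\bold{a})=0$ and every coordinate on both sides vanishes. This is precisely the routine zero/nonzero bookkeeping the statement asks us not to grind through.

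For the first statement, apply the scaling fact with $t=h^*$: the hypothesis $\supp{h}\subseteq\supp{h^*}$ is exactly $\supp{t}\supseteq\supp{h}$, giving $loc\_norm(h^*\odot f,h^*\odot g)=(h^*\odot h,f^*,g^*)$. For the second statement, apply the scaling fact to the pair $(h\odot f,\,h\odot g)$, whose local normalisation is $(h,f,g)$ by hypothesis, using the rescaling tensor $t$ defined by $t(\bold{a})=h^*(\bold{a})/h(\bold{a})$ for $\bold{a}\in\supp{h}$ and $t(\bold{a})=0$ otherwise. Because $\supp{h}=\supp{h^*}$, this $t$ satisfies $\supp{t}=\supp{h}$, so the precondition $\supp{t}\supseteq\supp{h}$ holds, and a direct check --- trivially also at $\bold{a}\notin\supp{h}$, where both sides vanish --- gives $t\odot(h\odot f)=h^*\odot f$, $t\odot(h\odot g)=h^*\odot g$, and $t\odot h=h^*$. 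The scaling fact then yields $loc\_norm(h^*\odot f,h^*\odot g)=(h^*,f,g)$, as required.

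The only point needing genuine care is that every division occurring in the $g^*$-formula (by $f(\bold{a})$, or by $t(\bold{a})f(\bold{a})$ or $h(\bold{a})f(\bold{a})$ in the derived statements) must be certified as a division by a nonzero quantity before it is performed; this is exactly what the support hypotheses buy, and in the second statement it is the \emph{equality} $\supp{h}=\supp{h^*}$ --- not mere containment --- that makes the rescaling from $h$ to $h^*$ invertible on $\supp{h}$, hence the construction of $t$ legitimate. Beyond this, the proof is a finite case analysis with no substantive obstacle.
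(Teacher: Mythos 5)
Your proof is correct. The paper states Lemma~\ref{lemm:loc} without any proof, so there is no authorial argument to compare against; your pointwise case analysis (splitting on the zero pattern of $f(\bold{a}),g(\bold{a})$) is exactly the routine verification the definition of $loc\_norm$ calls for, and I checked each case. The one organising idea you add --- the single ``scaling fact'' that any $t$ with $\supp{t}\supseteq\supp{f}\cup\supp{g}$ passes through a local normalisation as $loc\_norm(t\odot f,t\odot g)=(t\odot h,f^*,g^*)$ --- cleanly yields both halves of the lemma, with the second half obtained by rescaling via $t=h^*/h$ on $\supp{h}$; your closing remark correctly identifies that the equality $\supp{h}=\supp{h^*}$ (rather than mere containment) is what makes that rescaling well defined and support-preserving.
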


By using this local normalisation scheme, we can \emph{fully} normalise a symTDD $\mathcal{T}$. A symTDD $\mathcal{T}$ is called \textbf{fully normalised} if for each internal node $v$ with two outgoing edges weights $f_v$ and $g_v$ and any path leading to it with accumulated weights $h_v$ we have $loc\_norm(h_v\odot f_v,h_v\odot g_v)=(h_v,f_v,g_v)$. The fully normalised form reveals the invariance under local normalisation. Typically, a symTDD can be fully normalised by pushing all weights down to the bottom and then performing the local normalisation procedure on each internal node from bottom to top. During this process, however, we may need to split a lot of nodes causing additional time and memory overhead. But the following theorem shows that the normalisation procedure can be done more efficiently for the cases considered in this paper.

% \begin{defn}
% \end{defn}

\begin{thm}[normalisation]\label{thm:normalisation}
Let $\mathcal{T}$ be a symTDD with quantum and classical index sets $I$ and $S$. For each internal node $v$ of $\mathcal{T}$ and any incoming edge $e$ of $v$, write $h_v,f_v,g_v$ for, respectively, the weights of $e$ and the two outgoing edges of $v$.
Then $\mathcal{T}$ is fully normalised if and only if $loc\_norm(h_v\odot f_v,h_v\odot g_v)=(h_v,f_v,g_v)$ for each internal node $v$ and any incoming edge $e$ of $v$.
\end{thm}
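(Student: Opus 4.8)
The plan is to turn the statement into a question about the \emph{supports} of edge weights, with Lemma~\ref{lemm:loc} doing the heavy lifting. Two preliminary facts are the engine of the proof. First, by the second part of Lemma~\ref{lemm:loc}, the fixed-point property $loc\_norm(h\odot f,h\odot g)=(h,f,g)$ transfers from $h$ to any tensor $h^*$ with $\supp{h^*}=\supp{h}$ (and, by symmetry of the roles of $h$ and $h^*$, back again); thus whether this property holds depends on $h$ only through $\supp{h}$. Second, the property itself forces $\supp{f},\supp{g}\subseteq\supp{h}$: applied to the inputs $h\odot f$ and $h\odot g$, the local-normalisation identity $\supp{(\cdot)^*}=\supp{(\cdot)}$ gives $\supp{f}=\supp{h\odot f}=\supp{h}\cap\supp{f}$, so $\supp{f}\subseteq\supp{h}$, and likewise for $g$. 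Read at a node, this says: once the node-local condition holds, the support of each \emph{outgoing} weight of that node is contained in the support of each of its \emph{incoming} weights.

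For the ``if'' direction, assume the node-local condition of the theorem holds at every internal node, fix an internal node $v$, and take any path $P=(e_0,\dots,e_k)$ from the root edge to $v$ with accumulated weight $h_P=w_{e_0}\odot\cdots\odot w_{e_k}$, so that $\supp{h_P}=\bigcap_j\supp{w_{e_j}}$. Reading $P$ from the bottom up, for each $j\ge 1$ the edge $e_j$ is an outgoing edge of the node at which $e_{j-1}$ arrives, so the second fact gives $\supp{w_{e_j}}\subseteq\supp{w_{e_{j-1}}}$; the supports thus form a decreasing chain and $\supp{h_P}=\supp{w_{e_k}}$, the support of the last edge, which is an incoming edge of $v$. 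Feeding the node-local condition at $v$ for $e_k$ into the first fact (with $h=w_{e_k}$, $h^*=h_P$) yields $loc\_norm(h_P\odot f_v,h_P\odot g_v)=(h_P,f_v,g_v)$, which is exactly what is demanded of a fully normalised symTDD along the path $P$.

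For the ``only if'' direction, assume $\mathcal{T}$ is fully normalised and fix an internal node $v$ and an incoming edge $e$ of $v$ with weight $h_v$. If $e$ is the root edge, the one-edge path already exhibits $h_v$ as an accumulated weight and there is nothing to prove. Otherwise $e$ leaves some internal node $u$; pick any path $P$ from the root to $u$ (one exists, since every node of $\mathcal{T}$ is reachable from the root) with accumulated weight $h_P$, and let $P'=P\cdot e$, whose accumulated weight is $h_P\odot h_v$. Full normalisation at $u$ along $P$ together with the second fact (noting $h_v$ is one of $u$'s outgoing weights) forces $\supp{h_v}\subseteq\supp{h_P}$, hence $\supp{h_P\odot h_v}=\supp{h_P}\cap\supp{h_v}=\supp{h_v}$. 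Full normalisation at $v$ along $P'$ gives $loc\_norm((h_P\odot h_v)\odot f_v,(h_P\odot h_v)\odot g_v)=(h_P\odot h_v,f_v,g_v)$, and the first fact (with $h=h_P\odot h_v$, $h^*=h_v$, equal supports) collapses this to $loc\_norm(h_v\odot f_v,h_v\odot g_v)=(h_v,f_v,g_v)$, as required.

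The one place I expect to need care is the second preliminary fact: making explicit, from the case split defining $loc\_norm$ together with the support identities stated right after it, that $loc\_norm(h\odot f,h\odot g)=(h,f,g)$ truly entails $\supp{f},\supp{g}\subseteq\supp{h}$ — and, relatedly, making sure each appeal to Lemma~\ref{lemm:loc} is to its ``equal-support'' branch, which is precisely the configuration the decreasing-chain computation of $\supp{h_P}$ manufactures. Everything else amounts to walking along a finite path and chasing supports, so no genuine induction is involved.
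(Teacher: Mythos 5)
Your proposal is correct and follows essentially the same route as the paper's own proof: both directions reduce to showing that the accumulated path weight has the same support as the weight of the last (incoming) edge — via the decreasing support chain in one direction and via the fully-normalised condition at the parent in the other — and then invoke the equal-support branch of Lemma~\ref{lemm:loc}. Your explicit statement of the two preliminary support facts and the root-edge base case only makes the paper's argument slightly more careful, not different.
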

\begin{proof} 
If $\mathcal{T}$ has no internal nodes, this result is clearly true.

In the following, let $\mathcal{T}$ be a symTDD and $v$ an internal node of $\mathcal{T}$. For any  incoming edge  $e$ of $v$, suppose $h_v$ is weight on $e=(u,v)$ and $h_v^*$ the corresponding weight accumulated by multiplying all the weights leading to $v$ through $e$. In addition, write $f_v$ and $g_v$ for the weights on the two outgoing edges of $v$.

If $v$ is the root node of $\mathcal{T}$, we have $h_v=h_v^*$. Thus, $loc\_norm(h_v\odot f_v,h_v\odot g_v)=(h_v,f_v,g_v)$ iff $loc\_norm(h^*_v\odot f_v,h^*_v\odot g_v)=(h^*_v,f_v,g_v)$. Otherwise, suppose $h_v^*=h_0\odot \cdots \odot h_k \odot h_v$. Note that here $u$ is the parent node of $v$ s.t. $e=(u,v)$. Then $h_u^*=h_0\odot \cdots \odot h_k$.

Then, suppose $\mathcal{T}$ is fully normalised. We have $loc\_norm(h_v^*\odot f_v,h_v^*\odot g_v)=(h_v^*,f_v,g_v)$. Then $supp(h_v^*)=supp(f_v)\cup supp(g_v)$. By applying this property for the node $u$, we have $supp(h_u^*) \supseteq supp(h_v)$. As a result, we must have $supp(h_v)=supp(h_v^*)$. According to Lemma~\ref{lemm:loc}, we have $loc\_norm(h_v\odot f_v,h_v\odot g_v)=(h_v,f_v,g_v)$.

For the sufficiency part, we have $loc\_norm(h_v\odot f_v,h_v\odot g_v)=(h_v,f_v,g_v)$. According to the local normalisation scheme,  $supp(f_v) = supp(h_v\odot f_v) \subseteq supp(h_v)$. Also, we have $supp(h_v)\subseteq supp(h_k)\subseteq \cdots \subseteq supp(h_0)$. Thus, $supp(h_v^*)=supp(h_0\odot \cdots \odot h_k \odot h_v)=supp(h_v)$. Then according to lemma \ref{lemm:loc}, we have $loc\_norm(h_v^*\odot f_v,h_v^*\odot g_v)=(h_v^*,f_v,g_v)$.

\end{proof}

Now, we have a more efficient procedure for normalising a symTDD $\mathcal{T}$. That is, we first do a local normalisation for every internal node of $\mathcal{T}$ from top to bottom and then do another local normalisation for every internal node of $\mathcal{T}$ from bottom to top. The top-down procedure ensures that, for every internal node $v$, the support of the weight on any incoming edge of $v$ contains the supports of the weights of $v$'s outgoing edges; and the bottom-up procedure ensures that $loc\_norm(h_v\odot f_v, h_v\odot g_v)= (h_v, f_v, g_v)$ for every internal node $v$. As we don't need to push every weight down to the bottom, the probability of node splitting and hence the space and time consumption is significantly reduced.

\subsection{Reduction}\label{sec:red}

%Another thing that should be considered is the reduction. 
Reduction is a process to merge nodes that represent the same tensors so that the decision diagram can be as compact as possible. The following two reduction rules are commonly used for various kinds of decision diagrams, including BDDs and TDDs. 

\begin{itemize}
    \item RR1: Delete a node $v$ if its 0- and 1-successors are both $w$ and its low- and high-edges have the same weight $f$. Meanwhile, redirect the incoming edge of $v$ to $w$.
    \item RR2: Merge two nodes if they have the same index, the same 0- and 1-successors, and the same weights on the corresponding edges. 
\end{itemize}
Note in RR1 we have $h\odot f=h$ if the local normalisation has been conducted on $v$, where $h$ is the weight on the incoming edge of $v$.
%The reduction rules are based on the Boolean laws $x+x^{\prime}=1$ and $x\cdot x^{\prime}=0$. Thus, the property of a node in a \cyan{reduced normalised symTDD} is completely characterised by its two successors as well as the weights on the corresponding edges.

% \red{SL: If RR3 is only used in the examples, not necessary to include it here. May put it in the appendix if it does not affect the proof of the canonicity.} \magenta{RR3 is necessary for making the representation of Toffoli circuits tower form. I think it is powerful, since it can include a lot of cases. But I still do not know how powerful it is. In addition, RR2 can be seen as a special case of the generalised RR3.}

% \red{SL: I still think RR3 is a distraction to the reader. Will the reviewer care about this more powerful reduction method? I am sure they don't need to know this. If they are interested, they can check the appendix.}

%In addition to RR2, we also give a further reduction rule in Appendix.This reduction is an extension of RR2 but more powerful. We will use it in the calculation process to make the representation more compact. But since it will change the weights and affect the normalisation, we will not use it in making a canonical representation. 

\begin{figure}
    \centering
    \subfigure[]{
    \includegraphics[height=0.16\textwidth]{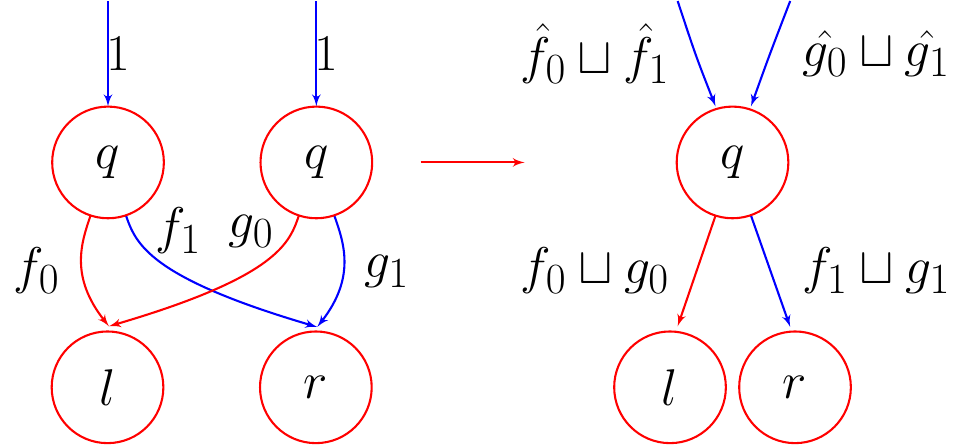}
    }
    \subfigure[]{
    \includegraphics[height=0.17\textwidth]{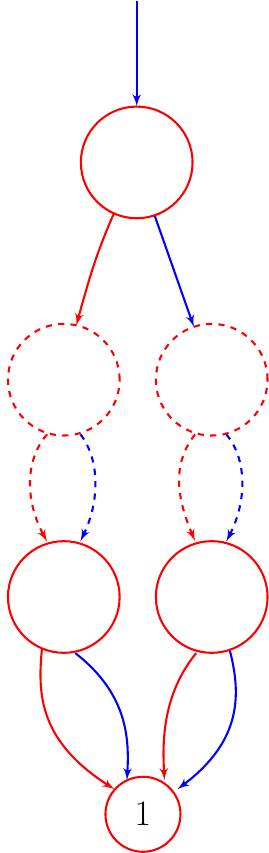}
    }
    \caption{Reduction rule RR3.}
    \label{fig:fur_red}
\end{figure}

While the above two reduction rules are sufficient, the  reduction rule RR3 as specified in  Fig.~\ref{fig:fur_red} (a) is sometimes more powerful. To apply RR2, we require the weights on the corresponding low- and high edges to be identical, which are often too restricted for tensors. From the left side of Fig.~\ref{fig:fur_red} (a), we can see that RR3 is applicable when two nodes (i) have the same index and the same 0- and 1-successors; (ii) $g_i(\bold{a})=f_i(\bold{a})$ for all $\bold{a} \in \supp{f_i} \cap \supp{g_i}$, where $f_i$ and $g_i$ are the weights on the outgoing edge to the $i$-successor for $i \in \{0,1\}$; and (iii) $\supp{f_{0}} \cap \supp{g_{1}} = \supp{f_{1}} \cap \supp{g_{0}}$. In the right side of Fig.~\ref{fig:fur_red} (a), 
$\widehat{f}$ denotes the weight tensor which takes value 1 at each element in  $\supp{f}$ and 0 otherwise; $f \sqcup g$ denotes the tensor which takes the value as $f(\bold{a})$ at each $\bold{a}\in\supp{f}$ and $g(\bold{a})$ otherwise. Here, we assume that $f$ and $g$ have the same value on their common support when using this operation. The basic idea here is that $(\widehat{f_0}\sqcup \widehat{f_1}) \odot (f_0 \sqcup g_0)=f_0$, $(\widehat{f_0}\sqcup \widehat{f_1}) \odot (f_1 \sqcup g_1)=f_1$, $(\widehat{g_0}\sqcup \widehat{g_1}) \odot (f_0 \sqcup g_0)=g_0$ and $(\widehat{g_0}\sqcup \widehat{g_1}) \odot (f_1 \sqcup g_1)=g_1$. In RR3, the weights $\widehat{f_0}\sqcup \widehat{f_1}$ and $\widehat{g_0}\sqcup \widehat{g_1}$ are used as filters indicating when the corresponding edges will lead to a node which represents a tensor with non-zero values. Then $f_0 \sqcup g_0$ and $f_1 \sqcup g_1$ specify the corresponding values. RR3 is more powerful than RR2 and is essential for making quantum circuits, such as the Toffoli circuit, in tower forms. Since RR3 may change the weights on the edges and destroy the normalisation of a symTDD, we only use it at specific times as indicated in Fig.~\ref{fig:fur_red} (b). That is, we will only merge two nodes using RR3 when all the paths leading to these two nodes come from the same node. The nodes with dotted lines mean that there can be zero or many such nodes.

\subsection{Canonicity}
% \delete{We note that this representation can be made canonical if we use RR3 carefully.} \lsj{If not used in the main text, then delete it.}
By using the normalisation and reduction rules, we are able to construct a canonical symTDD representation for every tensor-valued tensor. In the following, we only consider the use of RR1 and RR2 for simplicity. A fully normalised symTDD is reduced when no reduction rule (RR1 or RR2) can be further applied. The following theorem shows that there is, up to isomorphism and a given index order, a unique reduced symTDD for each tensor. Two symTDDs $\mathcal{F}$ and $\mathcal{G}$ w.r.t. the same index order are \emph{isomorphic} if there is a bijection $\sigma$ between the node sets  of $\mathcal{F}$ and $\mathcal{G}$ such that, for each node $v$, $v'=\sigma(v)$ and $v$ have the same index, the same weights on their incoming and outgoing edges, and $\sigma(low(v))=low(v')$, $\sigma(high(v))=high(v')$.

\begin{thm}[canonicity]\label{thm:canonicity}
Let $S=\{s_0,\ldots,s_{m-1}\}$ be a set of classical indices and $I=\{q_0,\ldots,q_{n-1}\}$ a set of quantum indices. Suppose $\phi$ is an $(I,S)$-tensor.  Given any index order $\prec$ of $I$, $\phi$ has a unique reduced symTDD representation w.r.t. $\prec$ up to isomorphism.
\end{thm}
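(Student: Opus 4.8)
The plan is to mimic the classical BDD/TDD canonicity argument, adapted to the tensor-valued weight setting, by induction on the number of quantum indices $n=|I|$. For the base case $n=0$, the $(I,S)$-tensor $\phi$ is itself just a classical tensor in $\{0,1\}^m\to\mathbb{C}$; its reduced symTDD is a single terminal node with incoming edge weight $\phi$ (after the convention that normalises the terminal to $1$), and any two reduced symTDDs for the same $\phi$ must agree on this weight, hence are isomorphic. For the inductive step, fix the minimal quantum index $q=q_0$ under $\prec$ and use the Boole--Shannon expansion $\phi = q'\cdot\phi|_{q=0} + q\cdot\phi|_{q=1}$, where each $\phi|_{q=a}$ is an $(I\setminus\{q\},S)$-tensor to which the induction hypothesis applies.

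The core of the argument is to show that a reduced symTDD $\mathcal{F}$ for $\phi$ is forced, up to isomorphism, by $\phi$ itself. First I would establish a \emph{semantic read-off lemma}: in any fully normalised symTDD, the tensor represented at a node $v$ reached by a path with accumulated weight $h$ equals $h\odot(\text{tensor rooted at }v)$, and crucially, because of full normalisation (Theorem~\ref{thm:normalisation}), the weight on the incoming edge of $v$ is the ``greatest common part'' $h_v$ determined by $loc\_norm$ of the two subtensors below $v$ --- so $h_v$ is a function of the tensor $\phi$ alone, not of the particular diagram. Concretely: if $\mathcal{F}$ has root $r$ with $idx(r)=q$ (the non-degenerate case), then the incoming edge weight of $r$ must be $h_r$ with $h_r(\bold{a})$ equal to $(\phi|_{q=0})(\bold{a})$ when that is nonzero and $(\phi|_{q=1})(\bold{a})$ otherwise (matching $loc\_norm$), and the two outgoing weights are the uniquely determined remainders $f^*,g^*$; the $0$- and $1$-successors then represent the fully-normalised, reduced symTDDs of the remainder tensors, which by the induction hypothesis are unique up to isomorphism. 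If instead the root does not test $q$, RR1 must have deleted it, which happens exactly when $\phi|_{q=0}=\phi|_{q=1}$ as weighted subtensors --- again a condition on $\phi$ alone --- and then $\mathcal{F}$ is the reduced symTDD of this common tensor over $I\setminus\{q\}$. One then checks that RR2 being inapplicable forces distinct nodes to represent distinct tensors, so the bijection $\sigma$ built level-by-level from the induction hypothesis is well-defined and is an isomorphism.

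Two supporting facts need to be pinned down before the induction runs cleanly. One is an \emph{existence} remark: every $(I,S)$-tensor does admit \emph{some} reduced fully-normalised symTDD --- build the full binary decision tree on $I$, push weights to the bottom, apply the two-pass normalisation of Theorem~\ref{thm:normalisation}, then apply RR1/RR2 to fixpoint; termination is clear since each reduction strictly decreases the node count. The other is the precise statement that $loc\_norm$ produces a canonical $(h,f^*,g^*)$ from an ordered pair $(f,g)$ --- this is essentially immediate from the term-by-term definition, and Lemma~\ref{lemm:loc} guarantees the extracted common part interacts correctly with the accumulated path weight, which is what lets me speak of ``the incoming weight determined by $\phi$'' independently of how much weight was pushed down.

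The main obstacle I anticipate is handling the interplay between the \emph{weight} on a node's incoming edge and the diagram's isomorphism type: unlike ordinary BDDs where nodes carry no weight, here the ``same tensor'' can in principle be split across different weight/subdiagram factorisations, and it is precisely full normalisation that rigidifies the factorisation. So the crux step is proving that full normalisation makes the pair (incoming weight, rooted subdiagram) a \emph{deterministic} function of the represented tensor --- i.e.\ that $loc\_norm$ has no slack --- and then verifying that RR1's deletion condition and RR2's merge condition are both expressible purely in terms of represented tensors. Once that is in place, the level-by-level construction of $\sigma$ via the induction hypothesis is routine bookkeeping. I would also note explicitly that this argument uses only RR1 and RR2 (as the theorem statement restricts), since RR3 can break normalisation and is not needed for canonicity.
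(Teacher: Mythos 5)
Your proposal follows essentially the same route as the paper: induction on the number of quantum indices, using Lemma~\ref{lemm:loc} and Theorem~\ref{thm:normalisation} to show the rooted sub-diagrams are themselves reduced and normalised, applying the induction hypothesis to them, and then invoking the determinism of $loc\_norm$ to force the root's incoming and outgoing weights. The only slip is your ``concrete'' formula for $h_r$ in terms of $(\phi|_{q=0})(\bold{a})$, which does not typecheck when $n>1$ since $\phi|_{q=a}$ still carries quantum indices; the paper avoids this by comparing two arbitrary reduced diagrams and deriving $h_1\odot f_1=h_2\odot f_2$ from the isomorphism of the sub-diagrams before applying $loc\_norm$ determinism, which is exactly the ``no slack'' point you identify as the crux.
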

\begin{proof} 
We prove this by using induction on the rank of the tensor. Let $\phi = s\in [2^m \to \mathbb{C}]$ be a rank 0 tensor. Any reduced symTDD of $\phi$ has only one node labelled with $1$ and has weight $s$. Suppose the statement holds for any rank $k$ tensor. Let $\phi$ be a rank $k+1$ tensor and $\prec$ any index order on $I$. Suppose $\mathcal{F}$, $\mathcal{G}$ are two reduced symTDDs that represent $\phi$ w.r.t. $\prec$. We show they are isomorphic. Let $q$ be the first index under $\prec$. Without loss of generality, we assume $\phi|_{q=0}\neq \phi|_{q=1}$. That is, $q$ is essential in $\phi$.

Write $r_\mathcal{F}$ and $r_\mathcal{G}$ for the root nodes of $\mathcal{F}$ and $\mathcal{G}$.  Then $q$ is the index of both $r_\mathcal{F}$ and $r_\mathcal{G}$. Let $h_1,f_1,g_1$ and $h_2,f_2,g_2$ be the weights of the incoming and two outgoing edges of, respectively, $r_\mathcal{F}$ and $r_\mathcal{G}$. We write $\mathcal{F}_{0}$ and $\mathcal{F}_1$ for the sub-DDs of $\mathcal{F}$ with root nodes the 0- and 1-successors of $r_\mathcal{F}$ and weights $h_1\odot f_1$ and $h_1\odot g_1$, respectively. 
The two sub-DDs $\mathcal{G}_0$ and $\mathcal{G}_1$ of $\mathcal{G}$ are defined in a similar way. 
%Argue why $\mathcal{F}_i$ are reduced!
We assert that $\mathcal{F}_{i}$ and $\mathcal{G}_{i}$ are all reduced. Apparently, no reduction rule can be applied as they are sub-DDs of reduced symTDDs. Thus we need only show that they are normalised. By Theorem~\ref{thm:normalisation}, we need only show their root nodes are locally normalised. Take $\mathcal{F}_0$ as an example. Let $v=low(r_\mathcal{F})$ (the 0-successor of $r_\mathcal{F}$) and $f_v,g_v$ be the weights on the two outgoing edges of $v$ in $\mathcal{F}$. Note that $f_1$ is the weight on the incoming edge of $v$. We need only show $loc\_norm(h_1\odot f_1\odot f_v, h_1\odot f_1\odot g_v) = (h_1\odot f_1,f_v,g_v)$. Since $\mathcal{F}$ is normalised, we have $\supp{f_1}\subseteq \supp{h_1}$ and $loc\_norm(f_1\odot f_v,f_1\odot g_v)=(f_1,f_v,g_v)$. By Lemma~\ref{lemm:loc}, we have  $loc\_norm(h_1\odot f_1\odot f_v, h_1\odot f_1\odot g_v) = (h_1\odot f_1,f_v,g_v)$. Thus $\mathcal{F}_i$ and $\mathcal{G}_i$ are all normalised and reduced.
It is easy to see that $\mathcal{F}_0$ and $\mathcal{G}_0$ represent the same tensor $\phi|_{q=0}$ and hence are isomorphic by induction hypothesis. Analogously, $\mathcal{F}_1$ and $\mathcal{G}_1$ represent the same tensor $\phi|_{q=1}$ and are also isomorphic.

Let $\sigma_i$ be the isomorphism between $\mathcal{F}_i$ and $\mathcal{G}_i$ for $i=1,2$. Because $\mathcal{F}$ is reduced, for any node $v$ in both $\mathcal{F}_1$ and $\mathcal{F}_2$, we have $\sigma_1(v)=\sigma_2(v)$. Define $\sigma$ as the extension of $\sigma_1$ and $\sigma_2$ by further mapping $r_\mathcal{F}$ to $r_\mathcal{G}$. We show $\sigma:\mathcal{F}\to \mathcal{G}$ is an isomorphism. To this end, we need only show $h_1=h_2$, $f_1=f_2$, and $g_1=g_2$. Since $\mathcal{F}_i\cong\mathcal{G}_i$, we have $h_1\odot f_1=h_2\odot f_2$ and $h_1\odot g_1=h_2\odot g_2$. Because $\mathcal{F}$, $\mathcal{G}$ are normalised, we have  $(h_1,f_1,g_1)=loc\_norm(h_1\odot f_1,h_1\odot g_1)=loc\_norm(h_2\odot f_2,h_2\odot g_2)=(h_2,f_2,g_2)$. This shows $\mathcal{F}\cong  \mathcal{G}$.

\end{proof}

\section{Case Study}\label{sec:case}

In this section, we show by examples how symTDD can be used in the verification of quantum circuits. 

% The corresponding source code and results are available at \url{https://github.com/Veriqc/Symbol_TDD}.

%The first example is to do the symbolic execution of quantum Fourier transform. For this circuit, given a computational basis input, the output state can be represented in $\mathcal{O}(n)$ nodes using the original QMDD and TDD. But it needs $\mathcal{O}(2^n)$ nodes to represent the whole matrix of the circuit originally. 

\vspace*{2mm}
\noindent\textit{Quantum Fourier Transform.} Suppose the input state is a computational basis state. The output state of QFT can be represented as a QMDD or TDD with $\mathcal{O}(n)$ nodes. The functionality of the circuit, however, requires $\mathcal{O}(2^n)$ nodes to represent in either QMDD or TDD.

\begin{figure}[h]
    \centering
    \subfigure[]{
    \includegraphics[height=0.2\textwidth]{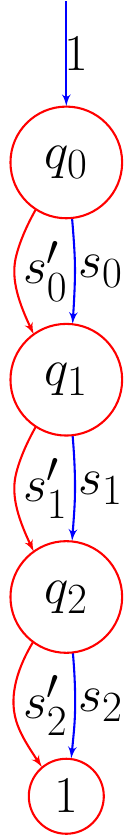}
    }
    \subfigure[]{
    \includegraphics[height=0.2\textwidth]{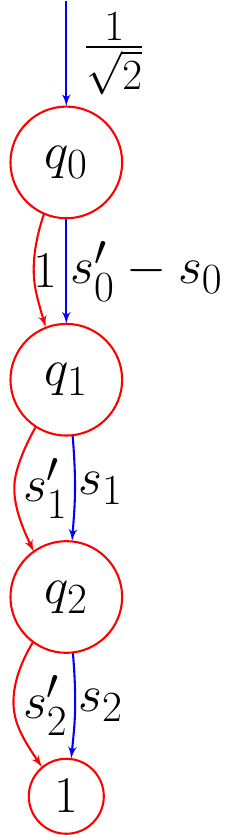}
    }
    \subfigure[]{
    \includegraphics[height=0.2\textwidth]{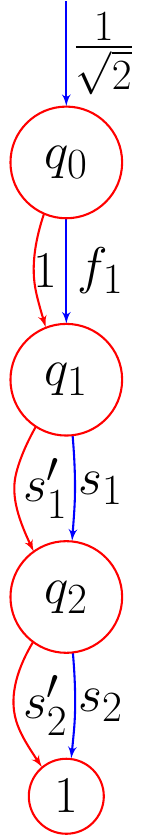}
    }
    \subfigure[]{
    \includegraphics[height=0.2\textwidth]{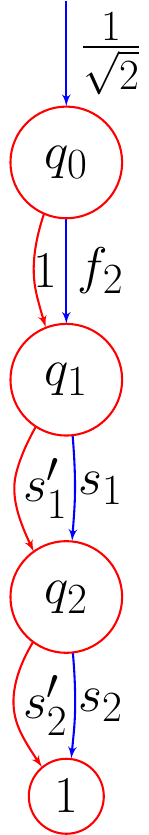}
    }
    \subfigure[]{
    \includegraphics[height=0.2\textwidth]{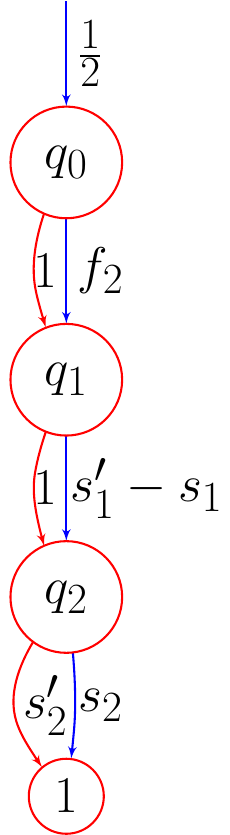}
    }
    \subfigure[]{
    \includegraphics[height=0.2\textwidth]{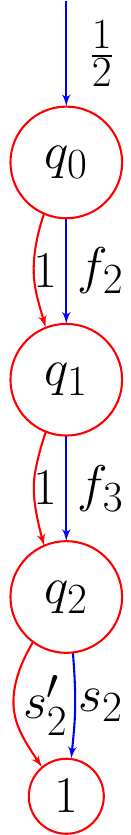}
    }
    \subfigure[]{
    \includegraphics[height=0.2\textwidth]{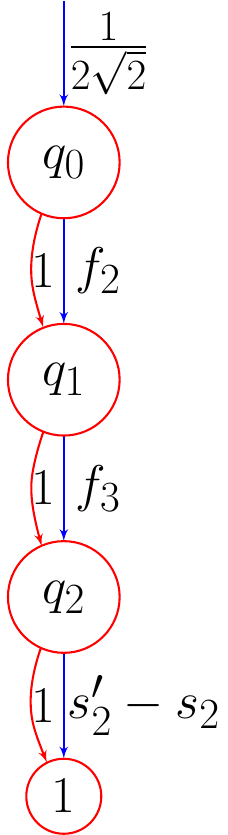}
    }
    \caption{symTDD for simulation of 3-qubit QFT, where (a)-(g) represent the symTDD of the input state and the states obtained after applying the gates shown in Fig. \ref{fig:exp-for-quantum-circuit} in sequence, and  $f_1=(s_0'-s_0)(s_1'-\imath s_1)$, $f_2=(s_0'-s_0 )(s_1'+\imath s_1)(s_2'+\frac{1+\imath}{\sqrt{2}}s_2 )$, $f_3=(s_1'-s_1 )(s_2'+\imath s_2 )$. }
    \label{fig:s_tdd_qft}
\end{figure}

To verify the functionality of the QFT circuit, we first execute the circuit on the symbolised input state $\ket{s_0}\cdots\ket{s_{n-1}}$. Fig.~\ref{fig:s_tdd_qft} illustrates the detailed process of executing the 3-qubit QFT on the symbolised input state $\ket{s_0}\ket{s_1}\ket{s_2}$. All symTDDs generated have a tower form with only four nodes. In this process, applying the Hadamard gate on $q_0$ changes the two weights $s_0^{\prime}$ and $s_0$ to $1$ and $s_0^{\prime}-s_0$ respectively. Then applying the Controlled-$R_2$ gate on $q_0,q_1$ further changes the weights to $1$ and $(s_0'-s_0)(s_1'-\imath s_1)$ respectively. This is because, no matter whether $R_2$ is applied on the target qubit $q_1$, the corresponding node is not changed (cf. Fig.~\ref{fig:exp_stdd}(a,c)) and only a weight is multiplied to the weight on the high-edge of the control qubit $q_0$. Similarly, applying the Controlled-$R_3$ gate on $q_0,q_2$ changes the weights to $1$ and $(s_0'-s_0 )(s_1'+\imath s_1)(s_2'+(\frac{1}{\sqrt{2}} + \frac{1}{\sqrt{2}}\imath)s_2 )$ respectively. The similar pattern will repeat on the last two qubits. The symTDD shown in Fig.~\ref{fig:s_tdd_qft}(g) contains the information of all possible output states of the circuit. Since $s_2'-s_2=e^{2\pi\imath0.s_2}$, $f_3=(s_1'-s_1 )(s_2'+\imath s_2 )=e^{2\pi\imath0.s_1s_2}$, and  $f_2=(s_0'-s_0 )(s_1'+\imath s_1)(s_2'+\frac{1+\imath}{\sqrt{2}}s_2 )=e^{2\pi\imath0.s_0s_1s_2}$, this verified the functionality of the QFT circuit.

\begin{figure}[h]
    \centering
     %\subfigure[]{
    \includegraphics[width=0.4\textwidth]{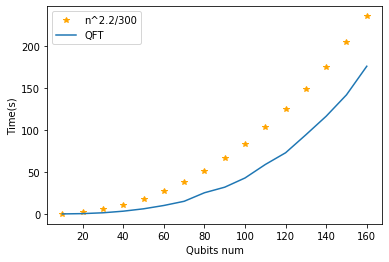}
     %}
     %\subfigure[]{
    \includegraphics[width=0.4\textwidth]{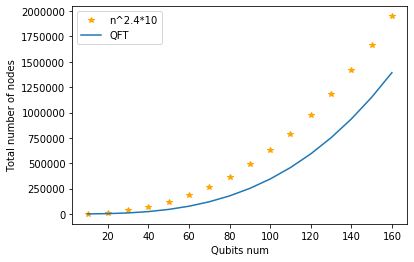}
     %}
    \caption{The time and space consumption of symTDD execution of 10-160 qubits QFT circuits. In this figure, we use \^\ to represent the exponent and n\^\ 2.2/300,  n\^\ 2.4*10 represent $n^{2.2}/300$ and $10 n^{2.4}$ respectively.}
    \label{fig:qft_data}
\end{figure}

% \delete{As a notable example, we further explore the performance of the symTDD on the quantum Fourier transform circuit.} 
We symbolically execute the QFT circuit from 10 to 160 qubits and depict in Fig.~\ref{fig:qft_data} the time and space consumption curve. The space consumption was measured as the total number of nodes used for representing both the symTDD and all tensor weights. As a comparison, we also give the curve for $n^{2.2}/300$ and $10n^{2.4}$ in the time and space figures. It can be seen that both the time and space consumption grow polynomially in the number of qubits. In fact, it only requires about ten thousand nodes to represent a 30-qubit QFT circuit in symTDD other than using around two billion nodes in a single TDD. Remarkably, QFT circuit with 160 qubits can be executed within three minutes on our laptop with 8GB memory.

% \red{SL: This discussion of QFT doesn't prove why the symTDD representation of $n$-qubit QFT has a tower form. A formal proof is necessary. Maybe we could show,  applying any gate in QFT and Toffoli circuits on any tower-form state, we will get another tower-form state?}

\vspace*{2mm}
\noindent\textit{Bernstein-Vazirani Algorithm.}
Suppose $O$ is an oracle such that $O\ket{\bold{x}}\ket{\bold{y}}=\ket{\bold{x}}\ket{\bold{y}\oplus f(\bold{x})}$, where $f(\bold{x})=\bold{s}\cdot \bold{x}=s_0x_0\oplus \cdots \oplus s_{n-1}x_{n-1}$ and $\bold{s}$ is a hidden Boolean string. The Bernstein-Vazirani algorithm (BV) \cite{bernstein1997quantum} can find $s$ with a single call of the oracle.
%\red{SL: please confirm.)}
% confirmed
The circuit for the 3-bit BV is shown in Fig.~\ref{fig:cir_bv}, where the oracle is implemented by a set of Controlled-$X^{s_i}$ gates. 
% \delete{For this circuit, the oracle is implemented by a set of Controlled-$X^{s_i}$ gates and the expected output state is $\ket{s_0}\cdots\ket{s_{n-1}}\ket{1}$.}

\begin{figure}[h]
    \centering
    \scalebox{0.8}{
\begin{quantikz}[column sep=0.4cm,row sep=0.28cm]
\lstick{$\ket{0}$}& \gate{H} &\ctrl{3}        &\qw            &\qw            & \gate{H} &\qw&\rstick{$\ket{s_0}$} \\
\lstick{$\ket{0}$}& \gate{H} &\qw             &\ctrl{2}       &\qw            & \gate{H} &\qw&\rstick{$\ket{s_1}$} \\
\lstick{$\ket{0}$}& \gate{H} &\qw             &\qw            &\ctrl{1}       & \gate{H} &\qw&\rstick{$\ket{s_2}$} \\
\lstick{$\ket{1}$}& \gate{H} &\gate{X^{s_0}}  &\gate{X^{s_1}} &\gate{X^{s_2}} & \gate{H} &\qw&\rstick{$\ket{1}$} \\
\end{quantikz}
}
\includegraphics[width=0.35\textwidth]{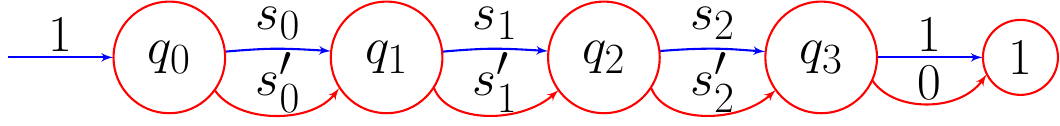}
\caption{The circuit (top) and symTDD (bottom) for verification of the Bernstein-Vazirani algorithm.}
\label{fig:cir_bv}
\end{figure}

% With symTDD, however, the Controlled-$X^{s}$ gate applied on the $k$-th and $l$-th qubits can be represented in a symbolic way, 
% where $X^{s}=\big[\begin{smallmatrix}s^{\prime}&s\\s&s^{\prime}\end{smallmatrix}\big]$ \yf{this sentence doesn't make much sense. Explain it in more detail or simply delete it}.

While for every $\bold{s}$ this circuit can be verified with a single run of TDD simulation, it needs exponential time in total to verify this for all $\bold{s}$. Executing the circuit symbolically, we obtain the symTDD of $\ket{s_0}\cdots\ket{s_{n-1}}\ket{1}$ as the output state (cf. Fig.~\ref{fig:cir_bv} (bottom)), indicating that the hidden string $\bold{s}$ has been successfully computed. It is also observed in our experiments that all symTDDs generated during this process are in a tower form.
% , since that Controlled-$X^{s_i}$ together with the two Hadamard gates and the input state $\ket{1}$ on the last qubit make it equivalent to apply a $Z^{s_i}=\begin{bmatrix}1&0\\0&s_i^{\prime}-s_i\end{bmatrix}$ gate on the $i$-th qubit.

% \begin{figure}[h]
%     \centering
    % \subfigure[]{
    % \includegraphics[height=0.25\textwidth]{figures/cx.pdf}
    % }
    % \subfigure[]{
    % \includegraphics[height=0.25\textwidth]{figures/bv.pdf}
    % }
    % \subfigure[]{
    % \includegraphics[height=0.2\textwidth]{figures/grover.pdf}
    % }
    % \includegraphics[width=0.35\textwidth]{figures/bv_rotate.pdf}
    % \caption{Symbolic verification of BV algorithm.
    % \red{(SL: Remove (a),(b) from the figure to save space. Explain more about these two DDs.)}
%     }
%     \label{fig:tdd_bv}
% \end{figure}

\begin{figure}[h]
    \centering
    % \subfigure[]{
    \includegraphics[width=0.4\textwidth]{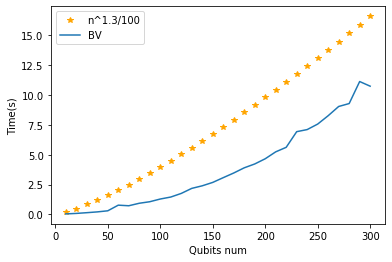}
    % }
    % \subfigure[]{
    \includegraphics[width=0.4\textwidth]{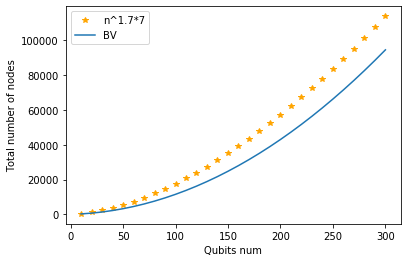}
    % }
    \caption{The time and space consumption of using symTDDs to verify the Bernstein-Vazirani algorithm. In the figure, we use \^\ to represent the exponent and n\^\ 1.3/100,  n\^\ 1.7*7 represent $n^{1.3}/100$ and $7n^{1.7}$ respectively.
    }
    \label{fig:bv_data}
\end{figure}

Fig. \ref{fig:bv_data} gives the time and space consumption needed for verifying BV using the symTDD. As a comparison, we also provide the data for $n^{1.3}/100$ and $7n^{1.7}$. It can be seen that both the time and space consumption grow polynomially in the number of qubits. In fact, 300-qubit BV can be verified within 10 seconds.

% \red{SL: Again, this does not prove the result for general BV. Does it hold that applying Controlled-$X^{s}$ to a tower-form state always yield a tower-form state?}

\vspace*{2mm}
\noindent\textit{Grover's Algorithm.} 
Similar to BV, Grover's algorithm \cite{nielsen2002quantum} can also be verified by symTDD. For the 3-qubit Grover's algorithm, an oracle $O$ as specified by $O\ket{\bold{x}}\ket{y}=\ket{\bold{x}}\ket{y\oplus f(\bold{x})}$ with $f(\bold{x})=1$ iff $x=s$, where $s$ is the solution that the algorithm aims to find.
% \yf{Are you assuming that $s = s_1s_0$ is the only solution of problem? If yes, you should state it clearly!} \yf{Now if so, why $f(x)=x_0^{s_0^\prime}x_1^{s_1^\prime}$ gives the right oracle? Explain it!} \yf{Note that if $x_0=x_1=1$, then $x_0^{s_0^\prime}x_1^{s_1^\prime}=1$ no matter what $s_0$ and $s_1$ are. That is, 3 is always one of the solutions no matter what $s$ is!}. 
Similar to BV, it also needs an exponential number of runs to verify for all possible $\bold{s}$ by using TDD. In contrast, a single run of the symTDD suffices. 

% Here the symTDD representation of the output is shown in Fig. \ref{fig:tdd_bv} (c), which exactly represents the state $\ket{s_0}\ket{s_1}\ket{1}$ up to a global phase $-1$.

\begin{figure}[h]
\centering
\scalebox{0.72}{
\begin{quantikz}[column sep=0.28cm,row sep=0.4cm]
\lstick{$\ket{0}$}&\gate{H}&\gate{X^{s_0^{\prime}}}&\ctrl{1}&\gate{X^{s_0^{\prime}}}&\gate{H}&\gate{X}&\qw     &\ctrl{1}&\qw     &\gate{X}&\gate{H}&\qw&\rstick{$\ket{s_0}$} \\
\lstick{$\ket{0}$}&\gate{H}&\gate{X^{s_1^{\prime}}}&\ctrl{1}&\gate{X^{s_1^{\prime}}}&\gate{H}&\gate{X}&\gate{H}&\gate{X}&\gate{H}&\gate{X}&\gate{H}&\qw&\rstick{$\ket{s_1}$} \\
\lstick{$\ket{1}$}&\gate{H}&\qw           &\gate{X}&\qw           &\qw     &\qw     &\qw     &\qw     &\qw     &\qw     &\gate{H}&\qw&\rstick{$\ket{1}$} \\
\end{quantikz}
}
\includegraphics[width=0.3\textwidth]{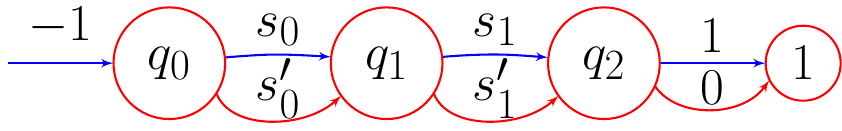}
\caption{The circuit (top) and symTDD (bottom) for verification of the 3-qubit Grover's algorithm.}
\label{fig:cir_grover}
\end{figure}

% Here the symTDD representation of the $X^{s^{\prime}}$ gate applied on the $k$-th qubit is shown in Fig. \ref{fig:tdd_grover} (a), and 
%The output of symbolically simulation of the circuit is shown in Fig.~ \ref{fig:cir_grover} (bottom), which exactly represents the state $\ket{s_0}\ket{s_1}\ket{1}$ up to a global phase $-1$.
% \yf{it might be better to trace out the last qubit}

% \begin{figure}[h]
%     \centering
%     \begin{tabular}{ccc}
%     \includegraphics[height=0.2\textwidth]{figures/Xs.pdf}
%  & \quad\quad\quad &
%     \includegraphics[height=0.2\textwidth]{figures/grover.pdf}
%     \end{tabular}
%    Here, $f_0=-0.177(s_2+s_2^{\prime}s_1+s_2^{\prime}s_1^{\prime}s_0)-0.884s_2^{\prime}s_1^{\prime}s_0^{\prime}$, $f_1=s_2+s_2^{\prime}s_1+5s_2^{\prime}s_1^{\prime}s_0+0.2s_2^{\prime}s_1^{\prime}s_0^{\prime}$, $f_2=s_2+s_2^{\prime}s_0+5s_2^{\prime}s_1s_0^{\prime}+0.2s_2^{\prime}s_1^{\prime}s_0^{\prime}$, $f_3=s_2+5s_2^{\prime}s_1s_0+s_2^{\prime}s_0^{\prime}+0.2s_2^{\prime}s_1^{\prime}s_0$,$f_4=s_1+s_1^{\prime}s_0+5s_2s_1^{\prime}s_0^{\prime}+0.2s_2^{\prime}s_1^{\prime}s_0^{\prime}$,$f_5=s_1^{\prime}+s_1s_0+5s_2s_1^{\prime}s_0+0.2s_2^{\prime}s_1s_0^{\prime}$, $f_6=s_1+s_1^{\prime}s_0^{\prime}+5s_2s_1^{\prime}s_0+0.2s_2^{\prime}s_1^{\prime}s_0$, $f_7=s_1^{\prime}+s_1s_0^{\prime}+5s_2s_1s_0+0.2s_2^{\prime}s_1s_0$.
%     \includegraphics[width=0.35\textwidth]{figures/grover_rotate.pdf}
%     \caption{symTDDs for verification of Grover's algorithm.}
%     \label{fig:tdd_grover}
% \end{figure}

The output of symbolic execution of the 3-qubit Grover circuit is shown in Fig.~ \ref{fig:cir_grover} (bottom), which exactly represents the state $\ket{s_0}\ket{s_1}\ket{1}$ up to a global phase $-1$.

For Grover's algorithm with more than 4 qubits, the output state is not exactly $\ket{s_0}\cdots \ket{s_{n-1}}\ket{1}$. Nevertheless, it can be analysed using symTDD as well. We symbolically execute the algorithm by applying the Grover iteration $\lfloor \sqrt{2^n}\pi/4 \rfloor$ times, and then compute the probability of successfully finding the solution $s$. Note that the desirable output $\ket{s_0}\cdots \ket{s_{n-1}}\ket{1}$ can also be represented as a symTDD $\phi_{\mathit{suc}}$. The calculation of the success probability 
boils down to the contraction of $\phi_{\mathit{suc}}$ with the resulting symTDD of executing Grover's algorithm. 
% \delete{We add the output  at the end of the circuit, and then simulate the circuit symbolically. The resulted symTDD has only one node with a weight which is a constant tensor with real number representing the amplitude of $\ket{s_0}\cdots \ket{s_{n-1}}\ket{1}$, and the square of it is the probability.}

Although the whole complexity is not polynomial, it shows an exponential advantage over TDDs. For example, it takes 29 seconds and 53,892 nodes to simulate the 8-qubit Grover's algorithm (with 8 iterations) using symTDD and find the success probability to be 0.9956 no matter what the solution is. In comparison, it takes 23 seconds and 1753 nodes to obtain the same answer for a given solution, which means that the expenses will sum up to $23 \cdot 2^7=2,944$ seconds and $1,753\cdot 2^7=224,384$ nodes when the task is to analyse all possible $\bold{s}$. When we consider the 9-qubit Grover's algorithm with 12 iterations, the corresponding data will be $153\ vs\ 34,371$ seconds and $164,377\ vs\ 829,184 $ nodes respectively, where the success probability is 0.9999.

\vspace*{2mm}
\noindent\textit{Toffoli Circuits.}
In addition, symTDD can be used to analyse Toffoli circuits \cite{szyprowski2011reducing}, i.e., circuits consisting of only $C^kX$ ($k\in \mathbb{N}$) gates. Every Toffoli circuit has a symTDD representation in tower form (cf. the bottom of Fig.~\ref{fig:cir_bit_flip}). This is because, for any $C^k X$ which splits the symTDD, the two branches have disjoint supports, and thus can be merged by using RR3 introduced in Sec.~\ref{sec:red}. 
% $C^kX\ket{f_0}\cdots\ket{f_{k-2}}\ket{f_{k-1}}=\ket{f_0}\cdots\ket{f_{k-2}}\ket{f_0\wedge \cdots \wedge f_{k-2}\oplus f_{k-1}}$, where $f_0, \cdots, f_{k-1}$ are all Boolean functions. Thus, applying a $C^kX$ on $k$ qubits is equivalent to apply a $X^{f}=\begin{bmatrix}f^{\prime}&f\\f&f^{\prime}\end{bmatrix}$ gate to the last qubit, where $f=f_0\wedge \cdots \wedge f_{k-2}$.
% \yf{Can't Toffoli circuits be efficiently simulated by ordinary OBDD?}
%
% Thus, for such a circuit, the complexity for simulating it using symTDD is $\mathcal{O}(n\cdot |\mathcal{B}|)$, where $|\mathcal{B}|$ is the biggest number of nodes used for representing the complex-valued Boolean functions and $n$ is the number of qubits. 
%
Moreover, the two weights on the outgoing edges of every internal node have forms $f^\prime$ and $f$ respectively, where $f$ is a tensor that states exactly the relations between the output state of this qubit and the input signals while $f^{\prime}$ represents its complement. By contrast, using TDD will interleave the input and output indices, making the analysis difficult.
\begin{figure}[h]
    \centering
%    \subfigure[]{%
    \scalebox{0.8}{%
    \begin{quantikz}[column sep=0.28cm,row sep=0.15cm]%
    \lstick{$\ket{s_0}$}&&\ctrl{3}&\qw     &\ctrl{5}&\qw     &\qw     &\qw     &\gate{X} &\qw      &\qw      &\qw\\
    \lstick{$\ket{s_1}$}&&\qw     &\ctrl{2}&\qw     &\ctrl{3}&\qw     &\qw     &\qw      &\gate{X} &\qw      &\qw\\
    \lstick{$\ket{s_2}$}&&\qw     &\qw     &\qw     &\qw     &\ctrl{2}&\ctrl{3}&\qw      &\qw      &\gate{X} &\qw\\
    \lstick{$\ket{0}$}&&\gate{X}&\gate{X}&\qw     &\qw     &\qw     &\qw     &\ctrl{-3}&\ctrl{-2}&\qw      &\meter{}\\
    \lstick{$\ket{0}$}&&\qw     &\qw     &\qw     &\gate{X}&\gate{X}&\qw     &\qw      &\ctrl{-1}&\ctrl{-2}&\meter{}\\
    \lstick{$\ket{0}$}&&\qw     &\qw     &\gate{X}&\qw     &\qw     &\gate{X}&\ctrl{-2}&\qw      &\ctrl{-1}&\meter{}
    \end{quantikz}
    }%
    % \includegraphics[height=0.15\textwidth]{figures/cir_bit_flip.pdf}
    % \label{fig:cir_bit_flip}
    \\
    \includegraphics[width=0.4\textwidth]{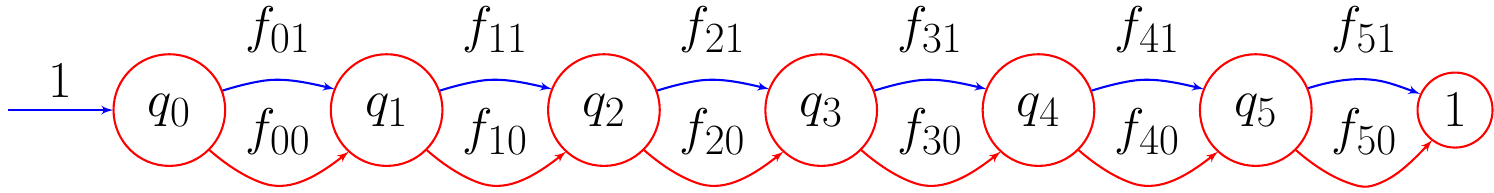}    
    % \label{fig:tdd_bit_flip}
    
    \caption{The circuit (top) and the symTDD (bottom) for verification of the bit-flip code circuit. Principle of deferred measurement \cite{nielsen2002quantum} is used.
    \label{fig:cir_bit_flip}
    } 
%    }%
%    \quad
%    \subfigure[]{%
%        \includegraphics[height=0.335\textwidth]{figures/bit_flip.pdf}
%        \label{fig:tdd_bit_flip}
%        }%
%    \\
%    \subfigure[]{%
%        \includegraphics[width=0.45\textwidth]{figures/bit_flip-li}
%        \label{fig:tdd_bit_flip}
%        }%
%    \caption{(a) The circuit for bit-flip code. Principle of deferred measurement \cite{nielsen2002quantum} is used. (b) symTDD for verification of bit-flip code circuit. \red{(SL: The DD is too small. Consider redrawing it.)}}
    % \label{fig:cir_grover}
\end{figure}

% \begin{figure}[h]
%     \centering
%     \includegraphics[height=0.3\textwidth]{figures/bit_flip.pdf}
%     \caption{symTDD for verification of bit-flip code circuit. \magenta{Try to combine this Fig with Fig. \ref{fig:cir_grover}?}}
%     \label{fig:tdd_bit_flip}
% \end{figure}

% \yf{you should at least explain what the functionality of this circuit is}

As an example of Toffoli circuits, let us consider the bit-flip code circuit \cite{nielsen2002quantum} shown in Fig.~\ref{fig:cir_bit_flip} (top), which can correct the input quantum state if there is at most one bit-flip error occurred. Let $q_0,\cdots , q_{5}$ be the 6 qubits from top to bottom. Symbolically executing this circuit, the corresponding symTDD is shown in Fig.~\ref{fig:cir_bit_flip} (bottom) where $f_{00}=f_{10}=f_{20}=s_0^{\prime}s_1^{\prime}s_2+s_0^{\prime}s_1s_2^{\prime}+s_1^{\prime}s_2^{\prime}$ and $f_{01}=f_{11}=f_{21}=f_{00}^{\prime}$. This means that the output of $q_0,q_1,q_2$ will be $\ket{000}$ when at least two of $s_0,s_1,s_2$ are 0 and the output of $q_0,q_1,q_2$ will be $\ket{111}$ when at least two of $s_0,s_1,s_2$ are 1. Furthermore, the weights $f_{30}=f_{31}^{\prime}=s_0s_1+s_0^{\prime}s_1^{\prime}$, $f_{40}=f_{41}^{\prime}=s_1s_2+s_1^{\prime}s_2^{\prime}$ and $f_{50}=f_{51}^{\prime}=s_0s_2+s_0^{\prime}s_2^{\prime}$, meaning that measuring $q_3,q_4,q_5$ at the end of the circuit will obtain result 0 iff $s_0=s_1$, $s_1=s_2$ and $s_0=s_2$, respectively. Note that to obtain the same information, we have to traverse all 56 paths in a TDD of 34 nodes, if TDD is employed for the verification task.

\section{Conclusion}\label{sec:conclusion}

This paper proposed the first decision-diagram approach for operating symbolic objects. The proposed symTDD is an extension of the tensor decision diagram and makes it possible to leverage the power of symbolic logic and tensor networks in a systematic way. Our experiments on QFT, BV, Grover's Algorithm have partially demonstrated the utility of symTDD. In particular, it provides an efficient approach for extracting information of the output states of quantum circuits. It can also be used to verify the correctness of quantum algorithms with classical control signals. Future work will employ SMT solvers like Z3 in extracting more useful output state information of quantum circuits. 

%and provide a potential for exponential speedup compared to the original TDD. Our method gives the first practical and usable tool for using the symbolic method to perform the verification of parametric Boolean quantum circuits. It shall give people more inspiration to inspect quantum algorithms in a symbolic way. 

\section*{Acknowledgements}

This work was partially supported by the Australian Research Council (Grant No.: DP220102059). The research of Xin Hong was also partially supported by Sydney Quantum Academy.

\bibliographystyle{IEEEtran}

\bibliography{references}

% \subsection*{Appendix: The step-by-step run of QFT}\label{sec:appendix}

% \begin{figure}[h]
%     \centering
%     \subfigure[]{
%     \includegraphics[height=0.2\textwidth]{figures/qft1.pdf}
%     }
%     \subfigure[]{
%     \includegraphics[height=0.2\textwidth]{figures/qft2.pdf}
%     }
%     \subfigure[]{
%     \includegraphics[height=0.2\textwidth]{figures/qft3.pdf}
%     }
%     \subfigure[]{
%     \includegraphics[height=0.2\textwidth]{figures/qft4.pdf}
%     }
%     \subfigure[]{
%     \includegraphics[height=0.2\textwidth]{figures/qft5.pdf}
%     }
%     \subfigure[]{
%     \includegraphics[height=0.2\textwidth]{figures/qft6.pdf}
%     }
%     \subfigure[]{
%     \includegraphics[height=0.2\textwidth]{figures/qft7.pdf}
%     }
%     \caption{symTDD for verification of QFT. Here, $f_1=(s_0'-s_0)(s_1'-\imath s_1)$, $f_2=(s_0'-s_0 )(s_1'+\imath s_1)(s_2'+(\frac{1}{\sqrt{2}} + \frac{1}{\sqrt{2}}\imath)s_2 )$, $f_3=(s_1'-s_1 )(s_2'+\imath s_2 )$. }
%     \label{fig:s_tdd_qft}
% \end{figure}

%\subsection*{Appendix: A further reduction rule}\label{sec:appendix}

% But since it will change the weights, we will use it in the calculation process but not use it in making a canonical representation.

% $\imath$
%输入是Boolean function
%Reachability
%图覆盖，VQA，图分割
% Error correction, surface code, graph state, distributed quantum computation

\end{document}